\documentclass[amsmath,amssymb,aps,prx,notitlepage,superscriptaddress,longbibliography,reprint]{revtex4-2}
\usepackage{graphicx}% Include figure files
\usepackage{dcolumn}% Align table columns on decimal point
\usepackage{bm}% bold math
\usepackage{braket}
\usepackage{amsthm}
\usepackage{epstopdf}
\usepackage{amssymb}
\usepackage{amsmath}
\usepackage{bbold}
\usepackage{hyperref}% add hypertext capabilities
\usepackage{comment}
\usepackage{xcolor}
\usepackage[ruled,vlined]{algorithm2e}

%\usepackage[mathlines]{lineno}% Enable numbering of text and display math
%\linenumbers\relax % Commence numbering lines
%\usepackage[showframe,%Uncomment any one of the following lines to test
%%scale=0.7, marginratio={1:1, 2:3}, ignoreall,% default settings
%%text={7in,10in},centering,
%%margin=1.5in,
%%total={6.5in,8.75in}, top=1.2in, left=0.9in, includefoot,
%%height=10in,a5paper,hmargin={3cm,0.8in},
%]{geometry}

\theoremstyle{definition}
\newtheorem{definition}{Definition}
\newtheorem{theorem}{Theorem}%[section]

\newtheorem{lemma}{Lemma}

\begin{document}
\preprint{APS/123-QED}
\title{Classical simulation of bosonic linear-optical random circuits beyond linear light cone}
%\thanks{A footnote to the article title}%
\author{Changhun Oh}%
\email{changhun@uchicago.edu}
\affiliation{Pritzker School of Molecular Engineering, University of Chicago, Chicago, Illinois 60637, USA}
\author{Youngrong Lim}%
\email{sshaep@kias.re.kr}
\affiliation{School of Computational Sciences, Korea Institute for Advanced Study, Seoul 02455, Korea}
\author{Bill Fefferman}
\email{wjf@uchicago.edu}
\affiliation{Department of Computer Science, University of Chicago, Chicago, Illinois 60637, USA}
\author{Liang Jiang}
\email{liang.jiang@uchicago.edu}
\affiliation{Pritzker School of Molecular Engineering, University of Chicago, Chicago, Illinois 60637, USA}
\date{\today}% It is always \today, today,
% but any date may be explicitly specified
\begin{abstract}
Sampling from probability distributions of quantum circuits is a fundamentally and practically important task which can be used to demonstrate quantum supremacy using noisy intermediate-scale quantum devices.
In the present work, we examine classical simulability of sampling from the output photon-number distribution of linear-optical circuits composed of random beam splitters with equally distributed squeezed vacuum states and single-photon states input. 
We provide efficient classical algorithms to simulate linear-optical random circuits and show that the algorithms' error is exponentially small up to a depth less than quadratic in the distance between sources using a classical random walk behavior of random linear-optical circuits.
Notably, the average-case depth allowing an efficient classical simulation is larger than the worst-case depth limit, which is linear in the distance.
Besides, our results together with the hardness of boson sampling give a lower-bound on the depth for constituting global Haar-random unitary circuits.
\end{abstract}

\maketitle

\section{Introduction}
Quantum computers are believed to provide a computational power that classical computers cannot achieve.
The ultimate goal in the field of quantum computation is implementing a fault-tolerant and universal quantum computer to solve classically intractable and practically important problems, such as integer factorization \cite{shor1994algorithms} and simulation of the real-time dynamics of large quantum systems \cite{lloyd1996universal}.
Since a current technology cannot build a scalable fault-tolerant quantum computer, 
much attention has been paid to demonstrate quantum supremacy using noisy intermediate-scale quantum (NISQ) devices \cite{preskillnisq}.
While numerous interesting tasks have been theoretically proposed and proven to be hard to simulate classically, there has been a rapid development in experiments to manipulate an intermediate size of quantum systems \cite{arute2019quantum, wang2017high, wang2018toward, wang2019boson, zhong2020quantum}.

Sampling from the probability distributions of randomly chosen quantum circuits is one of the promising problems to demonstrate quantum supremacy using NISQ devices.
Especially, various sampling problems have been proven to be hard using classical means unless the polynomial hierarchy collapses; boson sampling \cite{aaronson2011computational, hamilton2017gaussian}, IQP sampling \cite{bremner2011classical}, Fourier sampling \cite{fefferman2015power}, and random circuit sampling \cite{boixo2018characterizing} are the representative examples.
Recently, random circuit sampling was experimentally implemented to demonstrate quantum supremacy using the state-of-the-art superconducting qubits for the task that 
they claim a classical computer would cost a large amount of computational time \cite{arute2019quantum}.
On the other hand, bosonic circuits have also been widely studied theoretically and experimentally since Aaronson's seminal work \cite{aaronson2011computational}, the so-called boson sampling.
Boson sampling is a sampling problem based on linear optics with single-photon input states and photo-detectors.
Because of its relatively simple structure, there have been numerous proof-of-principle experiments of boson sampling \cite{wang2017high, wang2018toward, wang2019boson, zhong2020quantum}.
As the experimental system size of bosonic circuits grows rapidly, establishing adequate conditions for discriminating easiness and hardness of bosonic linear-optical circuits becomes more crucial for the practical demonstration of quantum supremacy. 

Circuit depth is one of the essential parameters that determine a quantum circuit's classical simulability.
In particular, a transition of sampling complexity from easiness to hardness appears as a circuit depth increases. 
On the one hand, when a quantum circuit is shallow, the output state is not entangled enough and easy to classically simulate \cite{vidal2003efficient,napp2019efficient,qi2020efficient}.
On the other hand, when a quantum circuit is deep enough to implement a global Haar-random unitary circuit that generates a large amount of entanglement, sampling from the probability distributions of the output state becomes classically intractable under plausible assumptions \cite{aaronson2011computational, constantdepthpra, hamilton2017gaussian, bouland2019complexity}.
Indeed, under bosonic Hamiltonian dynamics, phase transition behavior of sampling arising from the evolution time has been diagnosed~\cite{deshpande2018dynamical, maskara2019complexity}.
Whereas time-evolution dynamics under a bosonic Hamitonian using the Lieb-Robinson bound \cite{lieb1972finite} has been studied in Refs. \cite{deshpande2018dynamical, maskara2019complexity}, we focus on linear-optical circuits composed of Haar-random beam splitter arrays and employ a mapping from random linear-optical circuits to classical random walk \cite{zhang2020entanglement}.
Investigating random circuits enables us to analyze a generic property of linear-optical circuits.
Remarkably, we show that for random linear-optical circuits, one can efficiently sample their photon-number outcomes for a larger depth  than the one found in Ref. \cite{deshpande2018dynamical}.
Furthermore, currently used linear-optical circuits for boson sampling experiments are not sufficiently deep in implementing a global Haar-random circuit \cite{wang2019boson, zhong2020quantum}.
Besides, reducing the circuit depth when implementing a classically intractable circuit is important to minimize the adversarial effects of photon loss in experiment \cite{aaronson2016bosonsampling, renema2018classical, oszmaniec2018classical, garcia2019simulating, qi2020regimes,oh2021classical}.
Hence, understanding how large circuit depth is required to attain a classically intractable quantum circuit (e.g. a global Haar-random circuit \cite{aaronson2011computational}) is crucial from both practical and theoretical perspectives.

In this paper, we consider bosonic circuits consisting of Haar-random beam splitters between adjacent modes with squeezed vacuum states and single-photon states input.
The setup's importance stems from the computational-complexity hardness result of (Gaussian) boson sampling \cite{aaronson2011computational, hamilton2017gaussian}.
Besides, this setup is physically relevant in various practical situations such as continuous-variable random quantum networks and essential to study typical behaviors of generic bosonic circuits~\cite{zhuang2019scrambling, zhang2020entanglement}. 
We provide efficient classical approximate samplers for such bosonic random circuits and compute a bound on the circuit-depth for which the probability distribution of samplers are close to the ideal distribution.
Notably, the proposed classical samplers' approximation error is exponentially small in the number of sources.
Therefore, this result can provide a lower bound of linear-optical circuits' depth to implement classically intractable outcomes.
Also, the result together with hardness of boson sampling shows a lower bound of linear-optical circuits' depth to constitute a global Haar-random unitary circuit.
Finally, we discuss how our results relate to current boson sampling experiments.

%\begin{itemize}
%\item Other related works \cite{muraleedharan2019quantum}
%\end{itemize}

In Sec. \ref{sec:problem}, we describe the problem setup.
In Sec. \ref{sec:mapping}, we introduce a mapping between a bosonic random circuit and classical random walk.
Using the mapping, we present an efficient classical algorithm to sample outcomes of the problem with squeezed vacuum state input in Sec. \ref{sec:sq} and with single-photon state input in Sec. \ref{sec:fock}.
In Sec. \ref{sec:previous}, we compare our result to the previously known results.
Finally in Sec. \ref{sec:imply}, we discuss the implication of our results and summarize.

\begin{figure*}[t]
\includegraphics[width=500px]{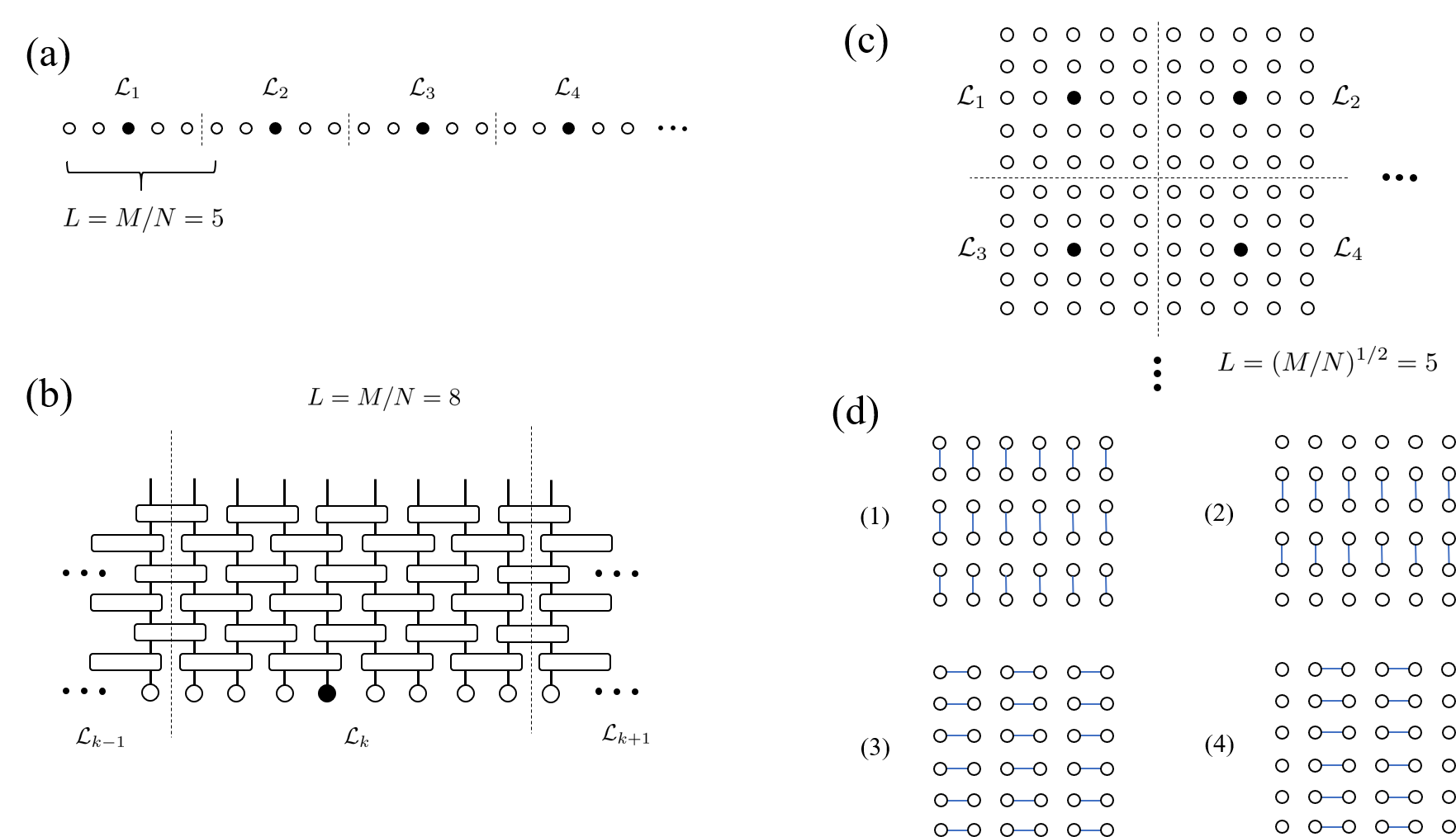}
\caption{(a) Initial state in 1d architecture. Black dots are occupied by sources and empty dots are vacuum. (b) Random beam splitter arrays in 1d. (c) Initial state in 2d architecture. (d) Random beam splitter arrays in 2d architecture. A single round consists of four steps (1)-(4) with 2 steps for each direction. The structure can be generalized for $d$-dimensional architecture, where a single round consists of $2d$ steps.}
\label{fig:circuit}
\end{figure*}

\section{Problem Setup}\label{sec:problem}
Let us consider a bosonic system where $N$ identical sources are equally distributed in $M$ bosonic modes of a $d$-dimensional lattice and evolve under random beam splitter arrays.
The problem setting is illustrated in Fig.~\ref{fig:circuit}.
Each beam splitter is characterized by an independent Haar-random unitary matrix.
More specifically, applying a beam splitter on mode $\hat{a}$ and $\hat{b}$ is described as
\begin{align}\label{eq:bs}
    \begin{pmatrix}
        \hat{a} \\ 
        \hat{b}
    \end{pmatrix}
    \rightarrow 
    \begin{pmatrix}
        \cos\theta & e^{i\phi}\sin\theta  \\ 
        -e^{-i\phi}\sin\theta  & \cos\theta
    \end{pmatrix}
    \begin{pmatrix}
        \hat{a} \\ 
        \hat{b}
    \end{pmatrix}
\end{align}
where $\theta$ and $\phi$ are independently sampled from a uniform distribution on $[0,2\pi)$.
Examples of the problem are boson sampling \cite{aaronson2011computational} and Gaussian boson sampling \cite{hamilton2017gaussian}, where identical sources correspond to a single-photon state and a squeezed vacuum state, respectively.
We denote sublattices of lattices, each of which contains a single source, as $\{\mathcal{L}_\alpha\}_{\alpha=1}^N$ satisfying $\cup_{\alpha=1}^N \mathcal{L}_\alpha=\mathcal{M}$, where $\mathcal{M}$ contains all $M$ bosonic modes in the problem, and sublattices are disjoint $\mathcal{L}_\alpha\cap\mathcal{L}_\beta=\emptyset$ for $\alpha\neq \beta$.
We also denote the set of modes having sources as $\mathcal{S}=\{s_1,\dots,s_N\}$, where $s_j$ represents a source mode in the $j$th sublattice. 
Thus, $|\mathcal{M}|=M$, $|\mathcal{S}|=N$, and $|\mathcal{L}_\alpha|=M/N$ for all $\alpha$.
Also, each sublattice is a $d$-cube with an edge length of $L=(M/N)^{1/d}$.
For simplicity, we assume $L$ to be a positive integer throughout the present paper.

Intuitively, when the depth of beam splitter arrays is not sufficiently large, correlations between different sublattices are limited.
In this case, we can treat sublattices to be independent each other without losing much information about the system.
For example, the Lieb-Robinson light cone limits the correlation between sublattices \cite{lieb1972finite}.
Using this, it has been shown that up to time $t\sim (M/N)^{1/d}$, there exists an efficient classical algorithm to sample from the output distribution of general quadratic bosonic Hamiltonian dynamics \cite{deshpande2018dynamical, maskara2019complexity}.
In this work, we focus on a circuit model of random 2-local passive transformation and find a depth in which classical simulation of sampling is efficient.
We show that for generic random circuits, the depth limit in which an efficient simulation is possible is deeper than the one found in Ref. \cite{deshpande2018dynamical, maskara2019complexity}.

On the other hand, when the depth of beam splitter arrays is large enough to implement a global Haar-random unitary, any approximate classical simulation of the relevant output probabilities is believed to be inefficient for single-mode squeezed vacuum states or single-photon states input for $M\sim N^6$ because existence of an efficient simulator leads to a collapse of polynomial hierarchy under some plausible assumptions \cite{aaronson2011computational, hamilton2017gaussian}.
Thus, two different aspects in terms of depths indicate that the depth of the random circuit plays a parameter that gives a phase transition of the relevant problem.
%It has been conjectured that $M\sim N^2$ is enough for its hardness.
Considering the problem's relation to (Gaussian) boson sampling, let $M=kN^\gamma$ and assume that the measurement basis is the photon-number basis.
Let us write the transformation of mode operators by a given beam splitter circuit characterized by a unitary operator $\hat{U}$ as
\begin{align}\label{eq:unitary}
    \hat{a}_j\to \hat{U}^\dagger\hat{a}_j\hat{U}=\sum_{k=1}^M U_{jk} \hat{a}_k.
\end{align}
We note that throughout the paper, both a Haar-random unitary and a Haar-random unitary circuit mean a Haar-random matrix of $U$.

We assess the accuracy of our simulation by total variance distance between an ideal probability distribution and the classical algorithm's output probability distribution $\|\mathcal{D}-\mathcal{D}_a\|$.
Explicitly, the total variance distance between two probability distribution $P_1(n),P_2(n)$ is defined as
\begin{align}
    \|\mathcal{D}_1-\mathcal{D}_2\|\equiv \frac{1}{2}\sum_{n}|P_1(n)-P_2(n)|.
\end{align}
Here, the probability distribution follows Born's rule, $P(n)=\text{Tr}[\hat{\rho}\hat{\Pi}(n)]$ with a positive operator-valued measure (POVM) $\{\hat{\Pi}(n)\}$ with $\hat{\Pi}(n)=\otimes_{j=1}^M |n_j\rangle\langle n_j|$ satisfying $\hat{\Pi}(n)\geq 0$ and $\sum_{n} \hat{\Pi}(n)=\hat{\mathbb{1}}$. 
In our case, $\hat{\rho}$ represents the output quantum state after a given circuit, and $|n_j\rangle\langle n_j|$ represents a projector corresponding to $n_j$ photon outcome at the $j$th mode.

In this paper, we define an efficient simulator as follows:
\begin{definition}
A sampling problem is {\it easy} if there exists a classical randomized algorithm that takes as input the unitary of a given circuit and outputs a sample $x$ from a distribution $\mathcal{D}_a$ such that the total variation distance between the ideal distribution $\mathcal{D}$ and the distribution $\mathcal{D}_a$ satisfies $\|\mathcal{D}-\mathcal{D}_a\|$ $\leq$ $\mathcal{O}$(1/poly($N$)), in runtime poly($N$) (See Ref. \cite{efficient}).
We call such a classical algorithm as {\it efficient sampler} or {\it efficient simulator}.
\end{definition}
The main result of the present work is to find an efficient simulator for random passive bosonic circuits as described above.
%Throughout the paper, an efficient simulator has the same meaning as an efficient sampler.

\iffalse
We now define an $\epsilon$-efficient simulator or $\epsilon$-efficient sampler as follows:
\begin{definition}
An {\it $\epsilon$-efficient sampler} is a classical randomized algorithm that takes as input the unitary and outputs a sample $x$ from a distribution $\mathcal{D}_a$ such that the total variation distance between the distributions $\|\mathcal{D}-\mathcal{D}_a\|\leq\epsilon$, in runtime poly($N,1/\epsilon$).
\end{definition}

One can immediately see that the $\epsilon$-efficient simulator is stronger than efficient simulator.
We show that a depth for an $\epsilon$-efficient simulator is shallower than an efficient simulator.
\fi

\section{Results}
\subsection{Classical random walk behavior of random linear-optical circuits}\label{sec:mapping}
We consider two different types of input sources: (i) single-mode squeezed states and (ii) single-photon states in the problem.
The two sources are particularly important in that they constitute Gaussian boson sampling \cite{aaronson2011computational} and boson sampling \cite{hamilton2017gaussian} together with a linear-optical circuit, respectively.
Before we investigate the two cases separately, we present a key feature of random beam splitter arrays, which is the average propagation speed of input photons.
As recently studied, random beam splitter arrays can be characterized by a classical random walk \cite{zhang2020entanglement}.
More specifically, let us consider a single source at $s$th mode with depth $D$ beam splitter arrays.
Since the input states are vacuum except for the source, properties of such a system can be captured by the dynamics of the source,
\begin{align}
    \hat{a}_{k}^{(D)}=U_{k,s}^{(D)}\hat{a}_{s}^{(0)}+\text{vac}.
\end{align}
Thus, the crucial factors are $U_{k,s}^{(D)}$, where the superscript $(D)$ is used to emphasize depth $D$, and $\text{vac}$ represents contribution from vacuum states.
Using a property of a Haar-random beam splitter, one can show that after applying a random beam splitter on modes $k$ and $k+1$, the average of the absolute square of $U_{k,s}^{(D)}$ transforms as (See Appendix \ref{appendix:randomwalk})
\begin{align}
    \mathbb{E}[|U_{k,s}^{(D+1)}|^2]=
    \mathbb{E}[|U_{k+1,s}^{(D+1)}|^2]
    =\frac{\mathbb{E}[|U_{k,s}^{(D)}|^2]+\mathbb{E}[|U_{k+1,s}^{(D)}|^2]}{2},
\end{align}
which exhibits the simple random walk's property.
Here, $\mathbb{E}[\cdot]$ represents an average over random beam splitter arrays.
The configuration of beam splitter arrays in 1d architecture is straightforward, which is shown in Fig. \ref{fig:circuit} (a).
It can be generalized to 2d architecture as shown in Fig. \ref{fig:circuit} (d) and further to $d$-dimensional system, the random walk behavior of which can be described by $d$ independent random walk \cite{zhang2020entanglement}.
Now, let us consider a single source in $\alpha$th sublattice and vacuum states for other modes and define a leakage rate from the source to the outside of the sublattice as
\begin{align}
    \eta\equiv \sum_{j\in \mathcal{M}\setminus\mathcal{L}_\alpha}|U_{j,s_\alpha}|^2.
\end{align}
Using the random walk property, the following lemma shows a property of the dynamics of the source:
\begin{lemma}\label{lemma:leakage}
    In $d$-dimensional architecture, when the number of mode is $M=kN^\gamma$ and $L=(M/N)^{1/d}$, the average leakage rate $\eta$ at depth $D$ from a source to the outside of the sublattice of the source is bounded by
    \begin{align}
    \mathbb{E}[\eta]\leq 2d \exp\left(-\frac{L^2}{8D/d}\right) = 2d \exp\left(-\frac{dk^2}{8D}N^{\frac{2}{d}(\gamma-1)}\right).
    \end{align}
    Here, $\mathbb{E}[\cdot]$ is an average over a depth $D$ circuit composed of Haar-random beam splitters.
    Furthermore, for $0<c_1<dk^2/8$, an arbitrary $c_2>0$, and a depth $D\leq c_1 N^{\frac{2}{d}(\gamma-1)-c_2}$,
    \begin{align}
        \eta \leq \exp(-N^{c_2}),
    \end{align}
    with a probability $1-\delta$ over the random beam splitters, where $\delta$ is exponentially small in $N$.
\end{lemma}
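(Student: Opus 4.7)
The plan is to exploit the classical random walk mapping recalled just before the lemma statement. Writing $p_k^{(D)} := \mathbb{E}[|U_{k,s_\alpha}^{(D)}|^2]$, unitarity of $U^{(D)}$ gives $\sum_k p_k^{(D)} = 1$, so $p^{(D)}$ is a probability distribution on the mode lattice. The averaging identity quoted in the text implies that a Haar-random beam splitter layer evolves $p^{(D)}$ by exactly one step of a classical symmetric random walk on the brickwork graph. Consequently $\mathbb{E}[\eta] = \sum_{j\notin\mathcal{L}_\alpha} p_j^{(D)}$ is exactly the probability that a walker started at $s_\alpha$ has left the sublattice $\mathcal{L}_\alpha$ after depth $D$.

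First I would reduce the $d$-dimensional problem to one dimension. In the architecture of Fig.~\ref{fig:circuit}, each round of $2d$ layers contains two layers per coordinate direction, and the walks along distinct coordinates are independent; hence after depth $D$ the walker executes $D/d$ steps along each axis, and a leakage event is the union of $d$ one-dimensional events ``walker leaves an interval of length $L$ centered at the source.'' A union bound over the $d$ axes supplies the prefactor $2d$. For each one-dimensional walk I would then apply a Hoeffding-type concentration inequality for sums of bounded independent increments to bound the probability of deviating by $L/2$ by $2\exp(-L^2/(8D/d))$; substituting $L^2 = (M/N)^{2/d}$ reproduces the expression stated in the lemma.

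For the high-probability statement I would invoke Markov's inequality on the non-negative random variable $\eta$, namely $\Pr[\eta > \exp(-N^{c_2})] \leq e^{N^{c_2}}\,\mathbb{E}[\eta]$. Plugging in $D \leq c_1 N^{(2/d)(\gamma-1)-c_2}$ with $c_1 < dk^2/8$ makes the exponent $N^{c_2}$ strictly smaller than the magnitude of the exponent in the bound on $\mathbb{E}[\eta]$, so the product collapses to $2d\exp(-\varepsilon N^{c_2})$ for some $\varepsilon > 0$ depending on the gap $dk^2/8 - c_1$. This is exponentially small in $N$, which is the desired $\delta$.

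The main obstacle I anticipate is the random-walk averaging identity itself: one must verify that after averaging Haar-randomly over each beam splitter the cross-terms contributing to $\mathbb{E}[|U_{k,s}^{(D)}|^2]$ really cancel, so that the diagonal ``probabilities'' $p_k^{(D)}$ evolve as a Markov chain rather than merely as part of some coupled system involving off-diagonal moments. This verification is deferred to Appendix~\ref{appendix:randomwalk}; granted that identity, the remainder is routine, the only care needed being consistent bookkeeping of the exponents of $N$ in $L$, $D$, and the target threshold $\exp(-N^{c_2})$.
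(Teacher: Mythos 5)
Your proposal is correct and follows essentially the same route as the paper's own proof: map the averaged $|U_{k,s}^{(D)}|^2$ to a symmetric random walk (with the averaging identity verified in Appendix~\ref{appendix:randomwalk}), reduce to $d$ independent one-dimensional walks with $D/d$ steps each, apply the two-sided Hoeffding tail bound $2\exp(-(L/2)^2/(2D/d))$ per axis with a union bound over axes to get the factor $2d$, and finish with Markov's inequality at threshold $\exp(-N^{c_2})$ using $c_1<dk^2/8$. The only cosmetic difference is that the paper writes the union bound as $1-[1-2\exp(\cdot)]^d\leq 2d\exp(\cdot)$ and adds a one-line remark that the infinite-lattice walk upper-bounds the bounded-lattice leakage.
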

\begin{proof}
    See Appendix \ref{appendix:leakage}.
\end{proof}
The lemma implies that up to a depth $D\leq c_1 L^{2-c_2}=c_1 N^{\frac{2}{d}(\gamma-1)-c_2}$ for $0<c_1<dk^2/8$ and an arbitrary $c_2>0$, only an exponentially small portion of a source can propagate to the outside of the sublattice.
The lemma plays a crucial role in both cases in the following sections.

\subsection{Squeezed vacuum state input}\label{sec:sq}
In this section, we consider a squeezed vacuum state as an input source.
Since the initial state is a Gaussian state with a zero displacement in the phase space, its covariance matrix fully characterizes the input state (See Refs. \cite{wang2007quantum, weedbrook2012gaussian, adesso2014continuous, serafini2017quantum} for more details about Gaussian states.).
The covariance matrix of a given quantum state $\hat{\rho}$ is defined as $V_{jk}=\text{Tr}[\hat{\rho} \{\hat{Q}_j,\hat{Q}_k \}]/2$ with a quadrature-operator vector $\hat{Q}\equiv (\hat{x}_1,\hat{p}_1,\dots, \hat{x}_M,\hat{p}_M)$, satisfying the canonical commutation relation $[\hat{Q}_j,\hat{Q}_k]=i\Omega_{jk}$, where
\begin{align}
	\Omega\equiv \mathbb{1}_M \otimes 
	\begin{pmatrix}
		0 & 1 \\ 
		-1 & 0
	\end{pmatrix}.
\end{align}
Note that Gaussian unitary transformations applied to a Gaussian state of a zero displacement can be equivalently characterized by applying symplectic transformations to the Gaussian state's covariance matrix, namely, $\hat{U}\hat{\rho}\hat{U}^\dagger\iff SVS^\text{T}$, where symplectic matrices conserve the canonical commutation relation, $S^\text{T}\Omega S=\Omega$.

We first begin with a vacuum state, the covariance matrix of which is given by $\mathbb{1}_{2M}/2$.
We then apply a squeezing symplectic transformation on source modes in $\mathcal{S}$, which is written as
\begin{align}
    S_\text{sq}=
    \oplus_{i=1}^M
    \begin{pmatrix}
        e^{r_i} & 0 \\
        0 & e^{-r_i}
    \end{pmatrix},
\end{align}
where $r_i=r$ for $i\in\mathcal{S}$ and $r_i=0$ otherwise.
We assume a momentum-squeezing operation with a real positive squeezing parameter $r>0$ without loss of generality.
Thus, the covariance matrix of the input state is written as
\begin{align}
    V_\text{in}=S_{\text{sq}}\frac{\mathbb{1}_{2M}}{2}S_{\text{sq}}^\text{T}=\frac{1}{2}\oplus_{i=1}^M \text{diag}(e^{2r_i},e^{-2r_i}).
\end{align}
In addition, beam splitters are also Gaussian unitary operations, so that beam splitter operations \eqref{eq:bs} between two modes can be characterized by symplectic matrices $S_\text{BS}$, which is formally written as
\begin{align}
    S_{\text{BS}}=
    \begin{pmatrix}
        \cos \theta & e^{i\phi}\sin \theta \\
        -e^{-i\phi}\sin \theta & \cos \theta
    \end{pmatrix} \otimes \mathbb{1}_2.
\end{align}
The symplectic matrix corresponding to given beam splitter arrays of depths $D$ can be efficiently computed by matrix multiplications of $2M\times 2M$ beam splitter symplectic matrices.

Now, based on the intuition that for low-depth circuits the correlation between sublattices is bounded, we give a classical algorithm which efficiently simulates a given bosonic quantum circuit.
The input and output of the algorithm are random beam splitters and a sample $(n_1,\dots, n_M)$ after the measurement on the photon-number basis, respectively.
The algorithm runs as follows:
(i) Compute the symplectic transformation of beam splitters.
(ii) For $j$ from 1 to $N$, iterate (iii) and (iv).
(iii) Initialize the $j$th source to be a squeezed vacuum state and other modes to be vacuum.
Compute the final covariance matrix for the $j$th sublattice using the symplectic matrix obtained in (i).
We keep only an $L\times L$ submatrix of the final $M\times M$ covariance on the $j$th sublattice.
(iv) Sample $(n_{j+1},\dots, n_{j+L})$ by computing the Hafnian relevant to the covariance matrix and using Monte-Carlo simulation \cite{quesada2020exact} (See Appendix \ref{appendix:GBS}).

The Monte-Carlo simulation in step (iv) is essentially a classical algorithm of Gaussian boson sampling, which generally requires an exponential amount of time in $N$.
An important property which guarantees that the simulator is efficient is that for each $j$, the relevant matrices for Hafnian computation are at most rank-2 because the system has a single source.
Remarkably, the computation cost of computing Hafnian of a $K\times K$ low rank matrix is poly$(K)$ \cite{barvinok1996two,kan2008moments,bjorklund2019faster}.
Thus, the above classical algorithm costs only polynomial time in $N$.
As a remark, a Gaussian state is a superposition of infinitely many photon states, we need to set a threshold for the total photons number for simulation, which causes a truncation error.
The probability of obtaining a large number of photons is exponentially small so that the truncation error can be properly suppressed using an appropriate threshold (See Appendix \ref{sec:threshold} for details.).
The presented algorithm can be used to simulate Gaussian boson sampling with threshold detectors, whose probability distributions are characterized by Torontonian \cite{quesada2018gaussian, zhong2020quantum}.
Basically, since threshold detectors can only discriminate between vacuum and more than or equal to a single-photon, the probability distribution for the Monte-Carlo simulation is coarse-grained.
Specifically, the probability of a detection event from threshold detectors can be obtained by computing Torontonian.

The output distribution of the algorithm is equivalent to that obtained by replacing the true covariance matrix $V_\text{out}$ with an approximated covariance matrix as follows:
\begin{align}\label{eq:approx}
	V_\text{out}=
	\begin{pmatrix}
		v_1 & x_{12} & \cdots & x_{1N} \\
		x_{21} & v_2 & \cdots & x_{2N} \\
		\vdots & \vdots & \ddots & \vdots \\
		x_{N1} & x_{N2} & \cdots & v_N \\
	\end{pmatrix}
	\to
	V_a=
	\begin{pmatrix}
		v_1' & 0 &  \cdots & 0 \\
		0 & v_2' &  \cdots & 0 \\
		\vdots & \vdots & \ddots & \vdots \\
		0 & 0 &  \cdots & v_N' \\
	\end{pmatrix},
\end{align}
where $v_j$ and $x_{jk}$ are $L\times L$ submatrices, corresponding to the $j$th sublattice's covariance matrix and correlations between $j$th and $k$th sublattices, and $v_j'$ is a $L\times L$ covariance matrix obtained in step (iii). 
Consequently, the approximated covariance matrix $V_a$ now represents a product of Gaussian states on different sublattices,  $\hat{\rho}_a=\otimes_{\alpha=1}^N \hat{\rho}_\alpha$.
One can expect that such an approximation works well when the correlation between sublattices are small enough.
Now, we find a depth where the approximation works.

Let us recall that total variance distance can be bounded by quantum infidelity $1-F$ \cite{fuchs1999cryptographic},
\begin{align}
\frac{1}{2}\sum_{n} |P(n)-P_a(n)|\leq \frac{1}{2}\| \hat{\rho}-\hat{\rho}_a \|\leq \sqrt{1-F(\hat{\rho},\hat{\rho}_a)}.
\end{align}
The upper bound of total variance distance indicates that since the outcome of a quantum state of a covariance matrix $V_a$ after a photon-number measurement can be efficiently simulated, it suffices to find how close quantum states of the covariance matrix $V_\text{out}$ and $V_a$ are in terms of quantum fidelity with respect to a quantum circuit's depth.
Quantum fidelity between two $M$-mode Gaussian states characterized by covariance matrices $V_1,V_2$, one of which is pure, can be written as \cite{spedalieri2012limit, banchi2015quantum}
\begin{align}\label{eq:gFid}
F(V_1,V_2)=\frac{1}{\sqrt{\det(V_1+V_2)}},
\end{align}
where we used a covariance matrix instead of quantum state $\hat{\rho}$ in the argument because a covariance matrix completely characterizes a quantum state in our case.
We assume that the first-moment of Gaussian states is zero throughout the paper, which is used in Eq.~\eqref{eq:gFid}.

Let us analyze the error of the approximation \eqref{eq:approx}.
The following lemma shows that the quantum infidelity between two Gaussian states characterized by $V_1$ and $V_2$ can be bounded by the Frobenius norm $\|\cdot\|$ of their difference matrix $X=V_1-V_2$.
\begin{lemma}\label{lemma:infid}
    Let $V_1$ be a covariance matrix of a Gaussian state in bosonic modes $\mathcal{M}$ obtained by applying beam splitter arrays on single-mode squeezed states of squeezing parameter $r$ and $V_2$ be a covariance matrix of a Gaussian state.
    If $X=V_1-V_2$ is small, $\|X\|\ll 1$, the quantum infidelity between the two Gaussian states is bounded by $1-F(V_1,V_2)\leq \frac{1}{2}\|X\|\sqrt{2N\cosh{4r}}+O(\|X\|^2)$.
\end{lemma}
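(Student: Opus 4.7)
The plan is to Taylor-expand the Gaussian-state fidelity formula~\eqref{eq:gFid} around $V_2=V_1$. Because $V_1$ arises from a product of pure squeezed vacua acted on by an orthogonal symplectic $S$ (the passive beam-splitter array), $V_1$ is pure, and every pure $M$-mode Gaussian state in this convention satisfies $\det V_1 = 4^{-M}$, so $\det(2V_1)=1$. Writing $V_1+V_2 = 2V_1 - X$ and factoring out $2V_1$ reduces the determinant to
\begin{equation}
    \det(V_1+V_2) \;=\; \det\!\bigl(\mathbb{1}-(2V_1)^{-1}X\bigr).
\end{equation}

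I would then apply the standard first-order expansions $\det(\mathbb{1}-A)=1-\text{Tr}(A)+O(\|A\|^2)$ and $(1-\epsilon)^{-1/2}=1+\epsilon/2+O(\epsilon^2)$, valid once $\|(2V_1)^{-1}X\|\ll 1$, to obtain
\begin{equation}
    F(V_1,V_2) \;=\; 1 + \tfrac{1}{2}\,\text{Tr}\!\bigl((2V_1)^{-1}X\bigr) + O(\|X\|^2).
\end{equation}
Since $F\leq 1$ the leading trace must be non-positive, so Cauchy--Schwarz for the Hilbert--Schmidt inner product yields
\begin{equation}
    1 - F \;\leq\; \tfrac{1}{2}\bigl|\text{Tr}\!\bigl((2V_1)^{-1}X\bigr)\bigr| + O(\|X\|^2) \;\leq\; \tfrac{1}{2}\,\|(2V_1)^{-1}\|\,\|X\| + O(\|X\|^2).
\end{equation}

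Finally I would compute $\|(2V_1)^{-1}\|$ by exploiting that $S$ is \emph{orthogonal}: from $V_1 = SV_\text{in}S^\text{T}$ one has $(2V_1)^{-1}=S(2V_\text{in})^{-1}S^\text{T}$, and the Frobenius norm is invariant under orthogonal conjugation, so $\|(2V_1)^{-1}\|=\|(2V_\text{in})^{-1}\|$. The matrix $(2V_\text{in})^{-1}$ is block-diagonal, equal to $\text{diag}(e^{-2r},e^{2r})$ on each of the $N$ source modes and $\mathbb{1}_2$ on each of the $M-N$ vacuum modes; summing the squared diagonal entries gives $\|(2V_\text{in})^{-1}\|^2 = 2N\cosh 4r + 2(M-N)$, from which the stated factor $\sqrt{2N\cosh 4r}$ emerges as the dominant piece in the strong-squeezing regime relevant for (Gaussian) boson sampling, with the vacuum contribution absorbable into the higher-order $O(\|X\|^2)$ remainder.

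The step I expect to be the main obstacle is controlling the regime of validity of the first-order expansion: because $\|(2V_1)^{-1}\|_\mathrm{op}=e^{2r}$, the quadratic remainder in $\det(\mathbb{1}-(2V_1)^{-1}X)$ scales like $e^{4r}\|X\|^2$, so the bound is only informative when $\|X\|\ll e^{-2r}$; this is precisely the content of the hypothesis $\|X\|\ll 1$ in the relevant squeezing range. Verifying $\det(2V_1)=1$ is a minor auxiliary step, following from Williamson's theorem and the fact that all symplectic eigenvalues of a pure Gaussian state equal $1/2$.
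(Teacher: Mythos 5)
Your route is essentially the paper's: both reduce $\det(V_1+V_2)$ to $\det\left(\mathbb{1}_{2M}-(2V_1)^{-1}X\right)$ (the paper writes this as $\det(\mathbb{1}_{2M}-S^{-1}XS^{-\mathrm{T}})$, which is the same object since $(2V_1)^{-1}=S^{-\mathrm{T}}S^{-1}$ and the trace is cyclic), isolate the first-order trace term, apply Cauchy--Schwarz for the Hilbert--Schmidt inner product, and evaluate the Frobenius norm of $S^{-\mathrm{T}}S^{-1}$ by splitting $S$ into its passive (orthogonal) and squeezing parts. The only cosmetic differences are that the paper controls the determinant by an AM--GM bound on its eigenvalues followed by a binomial expansion rather than a direct first-order Taylor expansion, and fixes the sign of the trace via $\mathrm{Tr}[\mathbb{1}_{2M}/2-\tilde{V}_2]\leq 0$ rather than by invoking $F\leq 1$.

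The one step that does not go through is your treatment of the vacuum modes. You correctly compute $\|(2V_{\mathrm{in}})^{-1}\|^2=2N\cosh 4r+2(M-N)$, but the $2(M-N)$ piece multiplies $\|X\|$ at first order and cannot be absorbed into the $O(\|X\|^2)$ remainder; nor is it subdominant for fixed $r$, since $M=kN^\gamma$ with $\gamma>1$ makes $2(M-N)$ the larger of the two terms for large $N$. What your derivation actually establishes is $1-F\leq \frac{1}{2}\|X\|\sqrt{2N\cosh 4r+2(M-N)}+O(\|X\|^2)$. It is worth noting that the paper's own proof computes exactly the same quantity $\mathrm{Tr}[(S^{-\mathrm{T}}S^{-1})^2]$ and silently drops the $2(M-N)$ contribution, so you have surfaced a gap in the stated constant rather than introduced a new one; the weaker (correct) bound only changes the polynomial prefactor $N^{5/4}$ appearing in the proof of Theorem 1 to a polynomial in $M$, which is harmless there because $\eta$ is exponentially small in $N$.
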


\begin{proof}
See Appendix \ref{appendix:infid}.
\end{proof}

The lemma guarantees that if $\|X\|$ is small enough, our approximation is accurate in terms of quantum fidelity.
Intuitively, one can expect that such an error is small when a circuit is so shallow that a source from each sublattice has not propagated much to other sublattices to constitute correlations between sublattices.
The following lemma shows that the correlations are bounded by the leakage rate from a source in a sublattice to outside of the sublattice.
\begin{lemma}\label{lemma:correlation}
Let $V$ be a covariance matrix of a Gaussian state in bosonic mode $\mathcal{M}$ obtained by applying beam splitter arrays on single-mode squeezed states of squeezing parameter $r$ and $V_a$ is a covariance matrix obtained by the proposed classical algorithm. Then, the Frobenius norm of $X=V-V_a$ is bounded as
\begin{align}
\|X\|^2\leq e^{4r}N^4(\eta+2\sqrt{\eta})^2,
\end{align}
where $\eta$ is the leakage rate from a source $s$ in $\mathcal{L}_\alpha$ to other sublattices $\mathcal{M}\setminus \mathcal{L}_\alpha$.
\end{lemma}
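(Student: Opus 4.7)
The plan is to exploit three structural facts: each squeezed source contributes only a rank-two ``excess'' correction to the vacuum covariance; the passive beam-splitter network is simultaneously orthogonal and symplectic; and the leakage rate $\eta$ controls how much of each excess spills outside its home sublattice. Write $V_\text{in}=\tfrac{1}{2}\mathbb{1}_{2M}+\tfrac{1}{2}\sum_\gamma\Delta_\gamma$, where $\Delta_\gamma=\text{diag}(e^{2r}-1,e^{-2r}-1)$ is supported only on the two quadratures of source mode $s_\gamma$, so that $V=\tfrac{1}{2}\mathbb{1}+\tfrac{1}{2}\sum_\gamma W_\gamma$ with $W_\gamma=S\Delta_\gamma S^{\text{T}}$. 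Because step (iii) of the algorithm switches on only source $\gamma$ and then retains only the $\mathcal{L}_\gamma\times\mathcal{L}_\gamma$ block,
\begin{equation*}
V_a=\frac{1}{2}\mathbb{1}_{2M}+\frac{1}{2}\sum_\gamma P_\gamma W_\gamma P_\gamma,\qquad X=V-V_a=\frac{1}{2}\sum_\gamma\bigl(W_\gamma-P_\gamma W_\gamma P_\gamma\bigr),
\end{equation*}
where $P_\gamma$ denotes the phase-space projector onto sublattice $\gamma$.

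Next I would exploit passivity. Since the underlying mode transformation is unitary, $S$ is simultaneously symplectic and orthogonal, so each $(j,k)$ mode-block $S_{jk}$ equals $|U_{jk}|$ times a $2\times 2$ rotation. This yields the tight identity $\|(W_\gamma)_{jk}\|_F=|U_{j s_\gamma}||U_{k s_\gamma}|\,\|D\|_F$ with $\|D\|_F\lesssim e^{2r}$. The Frobenius norm $\|X\|_F$ is then controlled by the sublattice-restricted sums $\sum_{j\in\mathcal{L}_\alpha}|U_{j s_\gamma}|^2$, each of which is at most $\eta$ whenever $\alpha\neq\gamma$ by the very definition of the leakage, and at most $1$ otherwise.

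Decomposing $X$ block by block along the sublattice partition, I would carry out a case analysis. For a diagonal block on $\mathcal{L}_\alpha$, only $\gamma\neq\alpha$ terms survive and each supplies two leakage factors, giving $\|P_\alpha X P_\alpha\|_F\lesssim N\|D\|_F\eta$. For an off-diagonal block with $\alpha\neq\beta$, split the sum over $\gamma$: the two boundary terms $\gamma\in\{\alpha,\beta\}$ have exactly one sublattice sum bounded by $\eta$ and the other bounded trivially by $1$, contributing $\|D\|_F\sqrt{\eta}$ each, while each of the remaining interior terms $\gamma\notin\{\alpha,\beta\}$ contributes $\|D\|_F\eta$. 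Summing the triangle inequality over all $N^2$ sublattice blocks yields $\|X\|_F\lesssim e^{2r}N^2(\eta+2\sqrt{\eta})$, and squaring gives the claimed bound, with any harmless absolute constant absorbed into $e^{4r}$.

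The main obstacle is precisely the off-diagonal case split: when the active source $\gamma$ coincides with one of the endpoint sublattices $\alpha$ or $\beta$, only one of the two sublattice sums can be bounded by $\eta$ and the other must be controlled trivially by $1$, so that summand's cost degrades from $\eta$ to $\sqrt{\eta}$. Carefully isolating these two boundary summands in every off-diagonal pair is what produces the additive $2\sqrt{\eta}$ term, and it is what prevents one from obtaining a sharper bound purely linear in $\eta$.
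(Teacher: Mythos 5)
Your argument is correct in substance and reaches the paper's bound by a genuinely different, more structural route. The paper works entry by entry: it expands each quadrature correlator $\langle\{\hat{x}_i,\hat{x}_j\}\rangle$, $\langle\{\hat{p}_i,\hat{p}_j\}\rangle$, $\langle\{\hat{x}_i,\hat{p}_j\}\rangle$ in the input mode operators, identifies which expectation values differ between the exact and approximate states, and then applies Cauchy--Schwarz to sums of $|U_{ik}||U_{jk}|$ separately for same-sublattice and different-sublattice index pairs. Your decomposition $X=\tfrac{1}{2}\sum_\gamma(W_\gamma-P_\gamma W_\gamma P_\gamma)$ with $W_\gamma=S\Delta_\gamma S^{\text{T}}$ packages exactly the same information linear-algebraically: the observation that $\|(W_\gamma)_{jk}\|_F=|U_{js_\gamma}||U_{ks_\gamma}|\,\|\Delta\|_F$ (from passivity of $S$ and unitary invariance of the Frobenius norm) replaces the paper's explicit correlator computations, and your $\eta$-versus-$\sqrt{\eta}$ case split for off-diagonal blocks with $\gamma\in\{\alpha,\beta\}$ is precisely the step in the paper where the sum over $k\in\mathcal{S}$ is separated into $k\in\mathcal{S}\setminus\{s_\alpha,s_\beta\}$ and the two boundary sources; you correctly identify this as the origin of the additive $2\sqrt{\eta}$. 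Your version is cleaner and makes the rank-two structure of each source's contribution manifest, at the cost of hiding the explicit dependence on which correlators ($\hat{x}\hat{x}$, $\hat{p}\hat{p}$, $\hat{x}\hat{p}$) carry the error. One step needs a small repair: in the final aggregation you should not sum the block norms by the triangle inequality over the $N^2$ blocks, since $\sum_{\alpha,\beta}\|P_\alpha XP_\beta\|_F$ overshoots $\|X\|_F$ by up to a factor of $N$ and would give $\|X\|_F\lesssim e^{2r}N^2(N\eta+\sqrt{\eta})$ rather than the claimed $e^{2r}N^2(\eta+2\sqrt{\eta})$. Instead use the exact block-orthogonality $\|X\|_F^2=\sum_{\alpha,\beta}\|P_\alpha XP_\beta\|_F^2$ (the blocks have disjoint support), apply the triangle inequality over $\gamma$ only \emph{within} each block, and then square; with the prefactor $\tfrac{1}{2}$ and $\|\Delta\|_F^2=(e^{2r}-1)^2+(e^{-2r}-1)^2\leq e^{4r}$ the constants close without any fudging, yielding $\|X\|^2\leq\tfrac{e^{4r}}{4}\left[N(N-1)^2\eta^2+N(N-1)\bigl((N-2)\eta+2\sqrt{\eta}\bigr)^2\right]\leq e^{4r}N^4(\eta+2\sqrt{\eta})^2$, which matches the paper's final inequality.
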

\begin{proof}
    See Appendix \ref{appendix:correlation}.
\end{proof}

Therefore, the interaction between a sublattice and the outside of the sublattice is characterized by the leakage rate of the source in the sublattice to the outside.
Since we already have the bound of the leakage rate with respect to the number of sources $N$, modes $M$ and depth $D$ from lemma \ref{lemma:leakage}, we are ready to prove the following theorem.
\begin{theorem}
    (Easiness for squeezed-state input) Consider an $M$-mode bosonic system of $d$-dimensional architecture with $N$ number of equally distributed squeezed vacuum sources and $M=kN^\gamma$ with $\gamma>1$ and photon-counting measurements. The corresponding sampling problem is easy for depth $D\leq c_1 N^{2(\gamma-1)/d-c_2}$ with a probability $1-\delta$ over the random beam splitters. Here, $\delta$ is exponentially small in $N$ and for $0<c_1<dk^2/8$ and an arbitrary constant $c_2>0$.
\end{theorem}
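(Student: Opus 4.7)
The plan is to chain the three lemmas already established in this section with the Fuchs--van de Graaf inequality in order to bound the total variation distance between the true photon-number distribution and the distribution produced by the algorithm of Sec.~\ref{sec:sq}, and then to verify separately that that algorithm runs in polynomial time in $N$.

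First I would fix a depth $D\le c_1 N^{2(\gamma-1)/d-c_2}$ with $0<c_1<dk^2/8$ and $c_2>0$. Applying Lemma~\ref{lemma:leakage} to each of the $N$ source modes $s_\alpha$ and taking a union bound over $\alpha=1,\dots,N$, every leakage rate $\eta_\alpha$ satisfies $\eta_\alpha\le \exp(-N^{c_2})$ simultaneously, with a failure probability that is still exponentially small in $N$. Conditioned on this high-probability event, Lemma~\ref{lemma:correlation} gives
\begin{equation*}
\|X\|\le e^{2r}N^{2}(\eta+2\sqrt{\eta}),
\end{equation*}
which is exponentially small in $N$ because $\sqrt{\eta}\le \exp(-N^{c_2}/2)$ dominates the polynomial prefactor $N^{2}$. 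Feeding this into Lemma~\ref{lemma:infid} yields an infidelity $1-F(V_\text{out},V_a)$ that is also exponentially small in $N$, and the Fuchs--van de Graaf inequality together with the data-processing inequality for the photon-number POVM gives $\|\mathcal{D}-\mathcal{D}_a\|\le 2\sqrt{1-F(V_\text{out},V_a)}$, which is certainly $O(1/\mathrm{poly}(N))$.

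Next I would verify the running time. For each of the $N$ outer iterations, composing the beam-splitter symplectic matrices of dimension $2M\times 2M$ across $O(D)$ layers costs $\mathrm{poly}(M)=\mathrm{poly}(N)$; extracting the $L\times L$ submatrix $v_j'$ is trivial; and the Monte Carlo step reduces to evaluating Hafnians of matrices of size $O(L)=O(N^{(\gamma-1)/d})$ which are of rank at most two, since the $j$th iteration contains a single squeezed source in a sea of vacuum. Low-rank Hafnians are computable in polynomial time in the matrix dimension by the results cited above, so the total cost is $\mathrm{poly}(N)$.

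The step I expect to require the most care is handling the Fock-space truncation needed to turn the Gaussian Monte Carlo procedure into a bona fide finite-time algorithm over a finite outcome set: squeezed-vacuum photon-number distributions have support on arbitrarily many photons, so one must introduce a cutoff. The argument I would use is that because the per-mode photon probabilities decay exponentially in the photon number (with a rate governed by $r$), a cutoff $n_\text{th}=\mathrm{poly}(N)$ per sublattice contributes only an additional inverse-polynomial TVD error, as laid out in Appendix~\ref{sec:threshold}. Adding this truncation error to the exponentially small covariance-approximation error established above still gives $\|\mathcal{D}-\mathcal{D}_a\|=O(1/\mathrm{poly}(N))$, completing the proof.
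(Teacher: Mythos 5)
Your proposal is correct and follows essentially the same route as the paper's proof: chain Lemma~\ref{lemma:leakage} (leakage bound at the given depth, with high probability over the circuit) into Lemma~\ref{lemma:correlation} (Frobenius-norm bound on $X=V_\text{out}-V_a$), then into Lemma~\ref{lemma:infid} and the Fuchs--van de Graaf inequality to get an exponentially small total variation distance, with the polynomial runtime coming from the rank-two Hafnian evaluations and the photon-number cutoff of Appendix~\ref{sec:threshold}. Your explicit union bound over the $N$ sources and your separate accounting of the truncation error are slightly more careful than the paper's write-up but do not change the argument.
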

\begin{proof}
Since the proposed algorithm is efficient, it suffices to find a depth in which the algorithm's error is small in $N$.
Using lemma \ref{lemma:infid} and lemma \ref{lemma:correlation}, we obtain the upper bound of total variance distance for an arbitrary POVM as
\begin{align}
\|\mathcal{D}-\mathcal{D}_a\|&\leq \left(\frac{N}{2}\cosh{4r} \|X\|^2+O(\|X\|^3) \right)^{1/4} \nonumber \\
%&\leq e^{r}\cosh^{\frac{1}{4}} 2r \left(\frac{N}{2}\right)^{1/4}  N\sqrt{\eta+2\sqrt{\eta}} \nonumber \\ 
&\leq c \eta^{1/4}N^{5/4}+O(\eta^{3/4}),
\end{align}
where $\eta$ is a leakage rate, and $c>0$ is some constant.
Lemma \ref{lemma:leakage} provides an upper bound of the leakage rate, which leads to, for large $N$,
\begin{align}
\|\mathcal{D}-\mathcal{D}_a\|&\leq c\exp\left(\frac{5}{4}\log N-\frac{N^{c_2}}{4}\right),
\end{align}
for $D\leq c_1 N^{\frac{2}{d}(\gamma-1)-c_2}$ with $0<c_1<dk^2/8$ and any $c_2>0$ with probability $1-\delta$ over the random beam splitters.
Thus, the total variance distance is exponentially small in $N$ with probability $1-\delta$.
\end{proof}

The theorem shows that our classical algorithm can simulate a quantum circuit with an exponentially small error for depth $D\leq c_1 L^{2-c_2}= c_1 N^{\frac{2}{d}(\gamma-1)-c_2}$ in polynomial time in $N$.

\subsection{Single-photon states}\label{sec:fock}
In this section, we consider single-photon sources as an initial state with the same configuration.
Single-photon sources are also important because the classical simulation of the output distribution after a Haar-random unitary circuit is proven to be hard under some plausible assumptions \cite{aaronson2011computational}.
After a passive transformation \eqref{eq:unitary}, the state evolves as
\begin{align}
    |\psi_0\rangle=\prod_{\alpha=1}^N\hat{a}^\dagger_{s_j}|0\rangle\to|\psi\rangle=\prod_{\alpha=1}^N\left(\sum_{k=1}^M U_{k,s_\alpha}\hat{a}^\dagger_{k}\right)|0\rangle.
\end{align}
We follow the method presented in Ref. \cite{deshpande2018dynamical} with a different correlation bound.
Particularly, while Ref. \cite{deshpande2018dynamical} employs the Lieb-Robinson bound \cite{lieb1972finite}, we use a correlation bound implied by Lemma  \ref{lemma:leakage}.
Let an $M$-dimensional tuple $r$ represent the initial configuration of single-photons and an $M$-dimensional tuple $t$ the final configuration of the photons from photon-number detection. For example, $r=(0,1,0,1,0)$ presents that we begin with single photon sources at the second and fourth modes when $M=5$, and $t=(0,1,1,0,0)$ presents that we detect photons at the second and third modes.
Also, we define lists $\text{in}$ and $\text{out}$ to represent positions of input and output photons, which gives $\text{in}=(2, 4)$ and $\text{out}=(2, 3)$ for the above example.
In our case, the input list is given by $\text{in}=(s_1,\dots,s_N)$.
The probability amplitude for the configuration is then written as
\begin{align}
    \phi(r,t)=\frac{1}{\sqrt{t!}}\sum_\sigma\prod_{j=1}^N U_{\text{out}_{\sigma(j)},\text{in}_j},
\end{align}
where $t!\equiv\prod_{i=1}^M t_i!$, the summation is taken over all possible permutations $\sigma$.
The probability distribution is then given by 
\begin{align}\label{eq:prob}
    P(r,t)=&|\phi(r,t)|^2
    =\frac{1}{t!}\sum_\sigma\prod_{j=1}^N |U_{\text{out}_{\sigma(j)},\text{in}_j}|^2+ \nonumber \\ 
    &\frac{1}{t!}\sum_{\sigma\neq\tau}\prod_{j=1}^N U_{\text{out}_{\sigma(j)},\text{in}_j}\left(\prod_{k=1}^N U_{\text{out}_{\tau(k)},\text{in}_k}\right)^*.
\end{align}
Our strategy is to keep the first line and ignore the second line, which is the same algorithm used in \cite{deshpande2018dynamical}.
Such an approximation can be interpreted as treating bosons as distinguishable particles.

The algorithm runs as follows:
(i) Iterate (ii) and (iii) for $j$ from 1 to $N$,
(ii) Compute the probability distribution $|U_{k,s_j}|^2$ over $1\leq k\leq M$, (iii) Sample $k$ according to the distribution and save it.
After iterating (ii) and (iii) for $j$ from 1 to $N$, we get $N$-photon outcome.
It is apparent that step (ii) and (iii) can be efficiently performed for given beam splitter arrays.
Thus, the algorithm's computational time cost is polynomial in $N$.
Hence, it suffices to show that the algorithm's accuracy is polynomially small in $N$.

Back to error analysis from Eq.~\eqref{eq:prob}, by rearranging permutations and using $|a+b|\leq |a|+|b|$, we find an upper-bound of the total variance distance (See Appendix \ref{appendix:fock} for details.)
\begin{align}\label{eq:fock_error}
    \epsilon&=\sum_t\frac{1}{2t!}\left|\sum_{\sigma\neq\tau}\prod_{k=1}^N U_{{\text{out}_{\sigma(k)}},{\text{in}_k}}\left(\prod_{l=1}^N U_{{\text{out}_{\tau(l)}},{\text{in}_l}}\right)^*\right| \\ 
    &\leq \frac{1}{2}\sum_j\sum_{\rho\neq\text{Id}}\prod_i|U_{j_i,\text{in}_i}||U_{j_i,\text{in}_{\rho(i)}}|,
\end{align}
where the sum $j$ is taken over ordered tuples $(j_1,\dots,j_N)$ with $1\leq j_l \leq M$ for all $1\leq l\leq N$, and the sum $\rho$ is taken over all possible permutations except for the identity $\text{Id}$.
Equivalently, we can express the upper-bound as
\begin{align}\label{eq:fock_upper_bound}
    \frac{1}{2}\sum_j\sum_{\rho\neq\text{Id}}\prod_i|U_{j_i,\text{in}_i}||U_{j_i,\text{in}_{\rho(i)}}|=\frac{1}{2}\sum_{\mathcal{I}_C} \prod_{i\in\mathcal{I}_C}C_i\prod_{k\in\mathcal{I}_D}D_k,
\end{align}
where the sum $\mathcal{I}_C$ is over nonempty subsets $\mathcal{I}_C$ of the indices of $\text{in}$ and $\mathcal{I}_D$ is the complement of $\mathcal{I}_C$, and
\begin{align}
    C_i&\equiv \sum_{j_i,\rho:\rho(i)\neq i}|U_{j_i,\text{in}_i}||U_{j_i,\text{in}_{\rho(i)}}|, \\ 
    D_i&\equiv \sum_{j_i,\rho:\rho(i)= i}|U_{j_i,\text{in}_i}||U_{j_i,\text{in}_{\rho(i)}}|=1.
\end{align}
Equivalently, $\mathcal{I}_D$ represents the set of fixed points.
Now, we show that $C_i$ can be bounded by the leakage rate $\eta$:
\begin{lemma}\label{lemma:c_bound}
    \begin{align}
        C_i&\leq 2\sqrt{\eta kN^{\gamma+1}}.
    \end{align}
\end{lemma}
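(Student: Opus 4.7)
The strategy is to unfold $C_i$ as a double sum over $j_i$ and the $(N-1)$ possible values $k=\rho(i)\ne i$, then bound it by Cauchy-Schwarz after splitting the $j_i$-sum into the source's own sublattice and its complement. Throughout I set $\mathrm{in}_i=s_\alpha$ and $\mathrm{in}_k=s_{\alpha_k}$, so $\alpha_k\ne\alpha$ for every $k\ne i$ because sources live in distinct sublattices.

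First I would write
\begin{equation*}
C_i=\sum_{k\ne i}\Bigl(\sum_{j\in\mathcal{L}_\alpha}+\sum_{j\notin\mathcal{L}_\alpha}\Bigr)|U_{j,s_\alpha}|\,|U_{j,s_{\alpha_k}}|\equiv A+B.
\end{equation*}
In $A$ the diagonal factor $|U_{j,s_\alpha}|$ is generically $O(1)$ but the off-diagonal factor is small because $s_{\alpha_k}$ sits in a different sublattice; in $B$ the roles are reversed. This is exactly the two-regime structure the leakage bound is designed to exploit.

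Next I would apply Cauchy-Schwarz to $A$ jointly over the $(j,k)$ indices,
\begin{equation*}
A\le\sqrt{\sum_{k\ne i}\sum_{j\in\mathcal{L}_\alpha}|U_{j,s_\alpha}|^2}\;\sqrt{\sum_{k\ne i}\sum_{j\in\mathcal{L}_\alpha}|U_{j,s_{\alpha_k}}|^2}.
\end{equation*}
The first factor is at most $\sqrt{N-1}$ by unit normalization of the $s_\alpha$ column of $U$. For the second factor, the key observation is that since sublattices are disjoint and $\alpha_k\ne\alpha$, one has $\mathcal{L}_\alpha\subseteq\mathcal{M}\setminus\mathcal{L}_{\alpha_k}$, so Lemma~\ref{lemma:leakage} applied to the source $s_{\alpha_k}$ gives $\sum_{j\in\mathcal{L}_\alpha}|U_{j,s_{\alpha_k}}|^2\le\eta$ for each $k$, hence the factor is at most $\sqrt{(N-1)\eta}$. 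This yields $A\le(N-1)\sqrt{\eta}$. A completely symmetric estimate on $B$, now using the leakage of $s_\alpha$ itself to bound $\sum_{j\notin\mathcal{L}_\alpha}|U_{j,s_\alpha}|^2\le\eta$ and the trivial bound $\sum_{j\notin\mathcal{L}_\alpha}|U_{j,s_{\alpha_k}}|^2\le 1$, gives $B\le(N-1)\sqrt{\eta}$.

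Combining, $C_i\le 2(N-1)\sqrt{\eta}$, and since $N\le M=kN^\gamma$ this in turn is bounded by $2\sqrt{N\cdot M\eta}=2\sqrt{\eta k N^{\gamma+1}}$, as claimed. The main obstacle is purely bookkeeping: one must be careful to invoke Lemma~\ref{lemma:leakage} in the right direction for each term (leakage of $s_{\alpha_k}$ into the foreign sublattice $\mathcal{L}_\alpha$ in $A$, versus leakage of $s_\alpha$ out of $\mathcal{L}_\alpha$ in $B$), so that the $N-1$ factor from summing over $k$ emerges cleanly and does not get mixed with the leakage suppression.
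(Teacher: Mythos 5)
Your proof is correct, and it follows the same basic blueprint as the paper's — split the mode sum into the source's own sublattice versus its complement, then control each piece with Cauchy--Schwarz and the leakage rate from Lemma~\ref{lemma:leakage} — but the execution differs in a way that actually yields a strictly tighter bound. The paper applies Cauchy--Schwarz first over the source index $k$ at fixed $j$ (writing $\sum_{k\neq i}|U_{j,k}|\leq\sqrt{(N-1)\sum_{k\neq i}|U_{j,k}|^2}$) and then again over $j$, which forces it to pay for the cardinality of the $j$-index set and produces the factors $\sqrt{\eta M(N-1)/N}$ and $\sqrt{\eta(M-N)(N-1)}$, summing to the stated $2\sqrt{\eta MN}=2\sqrt{\eta kN^{\gamma+1}}$. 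You instead apply a single Cauchy--Schwarz jointly over $(j,k)$ and exploit column normalization of the unitary for the ``large'' factor, so the mode count $M$ never enters and you land on $C_i\leq 2(N-1)\sqrt{\eta}$, which dominates the paper's bound since $N\leq M$; your final step $2(N-1)\sqrt{\eta}\leq 2\sqrt{\eta kN^{\gamma+1}}$ is then a strict relaxation. Your identification of $\mathcal{L}_\alpha\subseteq\mathcal{M}\setminus\mathcal{L}_{\alpha_k}$ as the hook for invoking the leakage of the foreign source $s_{\alpha_k}$ is exactly the right mechanism (the paper phrases the same split via the $\ell_\infty$ distance $|i-j|\lessgtr L/2$). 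The only caveat, shared equally by the paper, is that $\eta$ must be read as a uniform bound over all $N$ sources' leakage rates, which costs only a union bound in the high-probability statement of Lemma~\ref{lemma:leakage}. Since the improved $N$-dependence would only sharpen constants in Theorem~\ref{ecs_fock}, nothing downstream changes, but your version is the cleaner estimate.
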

\begin{proof}
    Appendix \ref{appendix:fock}.
\end{proof}

Finally, we show that the approximation algorithm is an efficient classical simulator for $D\leq c_1 N^{2(\gamma-1)/d-c_2}$ for $0<c_1<dk^2/8$ and an arbitrary constant $c_2>0$.
\begin{theorem}\label{ecs_fock}
    (Easiness for single-photon-state input) Consider an $M$-mode bosonic system of $d$-dimensional architecture with $N$ number of equally distributed single-photon sources and $M=kN^\gamma$ with $\gamma>1$ and photon-counting measurements. The corresponding sampling problem is easy for depth $D\leq c_1 N^{2(\gamma-1)/d-c_2}$ with a probability $1-\delta$ over the random beam splitters, with an exponentially small $\delta$ in $N$ and for $0<c_1<dk^2/8$ and an arbitrary constant $c_2>0$.
    %If $D\leq c_1 N^{2(\gamma-1)/d-c_2}$ for some constant $c_1$ and an arbitrary constant $c_2>0$, then the approximation treating bosons indistinguishable particles gives $\epsilon\leq \exp(-N^{c_3})$.
\end{theorem}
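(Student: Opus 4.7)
The plan is to mirror the structure of the squeezed-input theorem: the algorithm is manifestly efficient, so the entire content of the proof is bounding $\|\mathcal{D}-\mathcal{D}_a\|$ on the ``good'' event in which the leakage rates from every source are small. For runtime, constructing the $M\times M$ unitary $U$ from the depth-$D$ beam-splitter array costs $\mathrm{poly}(N)$ since $M=kN^\gamma$; each of the $N$ sampling steps only requires building the discrete distribution $\{|U_{k,s_j}|^2\}_{k=1}^M$ and drawing one sample from it, so the total cost is $\mathrm{poly}(N)$. Thus all that remains is to argue the total variation distance is, say, exponentially small in $N$.

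For the error, I would begin from the bound \eqref{eq:fock_upper_bound} and plug in Lemma~\ref{lemma:c_bound} uniformly: setting $C:=2\sqrt{\eta k N^{\gamma+1}}$, every factor $C_i$ is at most $C$, and every $D_k$ equals $1$, so the sum over nonempty subsets $\mathcal{I}_C\subseteq\{1,\dots,N\}$ collapses to
\begin{align}
\|\mathcal{D}-\mathcal{D}_a\| \;\leq\; \frac{1}{2}\sum_{m=1}^{N}\binom{N}{m}C^m \;=\; \frac{(1+C)^N-1}{2} \;\leq\; \frac{e^{NC}-1}{2},
\end{align}
which behaves like $\tfrac12 NC$ whenever $NC\ll 1$. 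Next I would invoke Lemma~\ref{lemma:leakage}, applied to each of the $N$ sources and combined with a union bound: in the stated regime $D\leq c_1 N^{2(\gamma-1)/d-c_2}$ with $0<c_1<dk^2/8$ and $c_2>0$, we get $\eta\leq\exp(-N^{c_2})$ for every source simultaneously, with failure probability at most $N\delta$, which remains exponentially small in $N$. Substituting this into the definition of $C$ yields $NC\leq 2\sqrt{k}\,N^{(\gamma+3)/2}\exp(-N^{c_2}/2)$, which is exponentially small in $N$ for any fixed $c_2>0$. Hence $\|\mathcal{D}-\mathcal{D}_a\|$ is exponentially small in $N$, far stronger than the $\mathcal{O}(1/\mathrm{poly}(N))$ threshold required by the definition of easiness.

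The main obstacle is really a bookkeeping one rather than a conceptual one: the combinatorial sum over permutations and subsets in \eqref{eq:fock_upper_bound} introduces factors that are worst-case $2^N$ (from the binomial coefficients) or $N!$ (from the permutations inside $C_i$), and one must check that these are dominated by the decay $\exp(-N^{c_2}/2)$ coming from the leakage bound. The binomial sum is harmless because once $NC\ll 1$, $(1+C)^N-1\le 2NC$, and the factorial-type combinatorics inside $C_i$ are already absorbed into the $\sqrt{N^{\gamma+1}}$ polynomial prefactor of Lemma~\ref{lemma:c_bound}. Exponential-in-$N^{c_2}$ decay beats any polynomial in $N$ for every $c_2>0$ and sufficiently large $N$, so the whole estimate goes through; one only needs a single additional union bound to promote Lemma~\ref{lemma:leakage}, stated for a single source, to the simultaneous statement over all $N$ sources used here.
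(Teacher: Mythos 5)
Your proposal is correct and follows essentially the same route as the paper: bound the total variation distance by Eq.~\eqref{eq:fock_upper_bound}, apply Lemma~\ref{lemma:c_bound} uniformly to each $C_i$, sum the resulting binomial series, and invoke Lemma~\ref{lemma:leakage} to make the whole expression exponentially small; your $(1+C)^N-1\le e^{NC}-1\approx NC$ estimate is just a slightly cruder (but still valid) version of the paper's Taylor-remainder bound $\binom{N}{2}(1+h)^{N-2}c^2$, since the paper's sum starts at $|\mathcal{I}_C|\ge 2$ while yours harmlessly over-counts the $m=1$ terms. Your explicit union bound over the $N$ sources is a small point of added rigor that the paper leaves implicit.
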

\begin{proof}
    Let $c$ be the upper-bound of $C_i$ presented in lemma \ref{lemma:c_bound}.
    From Eq.~\eqref{eq:fock_upper_bound},
    \begin{align}
    \epsilon&\leq \frac{1}{2}\sum_{l=|\mathcal{I}_C|\geq 2}^N\binom{N}{l}c^l=\frac{1}{2}[(c+1)^N-Nc-1] \\
    &\leq \exp[2\log N+(N-2)\log(1+c)+\log c^2] \\ 
    &\leq \exp[2\log N+2\sqrt{\eta k N^{\gamma+3}}+\log (4\eta kN^{\gamma+1})].
    %&=\exp\left[-N^{c_2}+o(N^{c_2})\right].
    %2\log N+2\sqrt{k} N^{(\gamma+3)/2}\exp(-N^{c_2}/2)+\log(4kN^{\gamma+1})\right].
\end{align}
Here, we have used Taylor's theorem that ${[(c+1)^N-Nc-1]/2}=\binom{N}{2}(1+h)^{N-2}c^2~\text{for some}~h\in[0,c]$ for the second inequality.
Using lemma \ref{lemma:leakage}, 
\begin{align}
    \epsilon\leq \exp\left[-N^{c_2}+o(N^{c_2})\right],
\end{align}
for depth $D\leq c_1 N^{2(\gamma-1)/d-c_2}$ with a probability $1-\delta$ over random beam splitter circuits, with an exponentially small $\delta$ in $N$ and for $0<c_1<dk^2/8$ and an arbitrary constant $c_2>0$.
Hence, the total variance distance $\epsilon$ is bounded by an exponentially small number in $N$.
\end{proof}

\section{Relation to previous results}\label{sec:previous}
In this section, we compare our result with the related previous results in the literature.
First, we compare our result with the one obtained in Refs. \cite{deshpande2018dynamical, maskara2019complexity} where time evolution of single photons under spatially local quadratic bosonic Hamiltonians has been studied.
While both study a passive nearest-neighbor-interacting quadratic Hamiltonian,
a crucial difference is that we have considered {\it random} beam splitter circuits rather than time evolution under a {\it general} spatially local quadratic bosonic Hamiltonian.
Specifically, our result shows that generic linear-optical circuits up to depth $D=O(L^{2-\epsilon})=O(N^{\frac{2}{d}(\gamma-1)-\epsilon})$ are easy to classically simulate except for an exponentially small fraction of random circuits, whereas the sampling complexity phase-transition of the Hamiltonian dynamics was found at time $t=\Theta(L)=\Theta( N^{\frac{\gamma-1}{d}})$ in Refs. \cite{deshpande2018dynamical, maskara2019complexity}.
Combining the two results, there exists an exponentially small portion of linear-optical circuits between depths $D=O( N^{\frac{\gamma-1}{d}+\epsilon_1})$ and $D=O(N^{\frac{2}{d}(\gamma-1)-\epsilon_2})$ which is hard to classically simulate, where $\epsilon_{1,2}>0$ are arbitrarily small.
Such a worst-case and average-case separation can be understood by noting that except for the exponentially small portion of circuits, single-photon states or squeezed vacuum states are still localized in each sublattice.
Thus, the relevant probabilities for each outcome can be approximated by the permanent or hafnian of low-rank matrices, which allows an efficient classical simulation.
Also, since the fraction of hard circuits is exponentially small in our case, the average-to-worst-case reduction fails \cite{aaronson2011computational, bouland2019complexity, movassagh2019quantum, bouland2021noise, napp2019efficient} and the separation appears.
Finally, we note that for exact sampling of Gaussian boson sampling with threshold detectors can be efficiently simulated up to depth $D=O(\log M)$.
%Only exponentially small portion of circuits exist between the two easiness bounds.
%Therefore, 
%Furthermore, it was shown that for both free and interacting bosons, a worst-case hardness manifests for depth $D=O(N^{\frac{\gamma-1}{d}+\epsilon})$, with an arbitrarily small $\epsilon$ in Refs. \cite{deshpande2018dynamical, maskara2019complexity}, which implies a phase transition at depth $D=\Theta(N^{\frac{\gamma-1}{d}})$.
%On the contrary, our easiness result shows that $D=O(N^{\frac{2}{d}(\gamma-1)-\epsilon})$ is not sufficient to construct average-case hard circuits.
%Therefore, even though a depth $D=O(N^{\frac{\gamma-1}{d}+\epsilon})$ is sufficient to construct a circuit hard to classically simulate, random circuits require a larger circuit depth for a classically intractable circuit.

In addition, our result renders an important implication on boson sampling experiments.
We first note that if one uses structured beam splitter arrays \cite{reck1994experimental, clements2016optimal, russell2017direct}, depth $D=M$ is sufficient to implement a Haar-random unitary matrix $U$ by appropriately manipulating each beam splitter's transmissivity.
Nevertheless, there is a practical reason why we need to focus on generic circuits in experiments.
As boson sampling experiments' circuit size grows to demonstrate quantum supremacy, manipulating each beam splitter becomes very demanding.
Indeed, the unitary matrix of beam splitter circuits in current boson sampling experiments presents a slight deviation from a Haar-random unitary matrix \cite{wang2018toward, wang2019boson}.
In the case of boson sampling experiments using $1d$ architecture \cite{wang2017high, wang2018toward}, our easiness bound indicates that the depth of circuits has to increase as $D>c_1N^{2(\gamma-1)-c_2}\sim M^{\frac{2(\gamma-1)}{\gamma}-c_2'}$ unless beam splitters are highly well controlled.
Especially for $\gamma=2$, the depth less than a linear-scaling of $M$ is easy to simulate.
In fact, the most recent boson sampling experiment has employed $2d$ architecture to implement a global Haar-random unitary \cite{wang2019boson}.
If one scales up the size of the experiment using the same structure, one can easily check that a depth increases as $D\propto \sqrt{M}$. 
Although their configuration of beam splitters and input states is different from ours, it is still interesting to see that our easiness bound, $D\propto M^{\frac{\gamma-1}{\gamma}-c_2'}$, is smaller than the experiment for $\gamma\leq 2$.
Thus, such a configuration might be proper to increase the system size prohibiting an efficient simulation based on our results for $\gamma \leq 2$, while one needs to invent a better configuration if the number of modes is much larger than the number of sources such that $M\propto N^\gamma$ with $\gamma > 2$.

\section{Discussions and conclusions}\label{sec:imply}
In the present work, we have analyzed bosonic linear-optical random circuits with squeezed vacuum states and single-photon states and shown an efficient classical simulator for depth $D=O(N^{\frac{2}{d}(\gamma-1)-\epsilon})$ with an arbitrarily small $\epsilon>0$.
Such an easiness-depth limit shows that if $\gamma\geq 2$, for an one(two)-dimensional configuration, bosonic circuits of a depth less than quadratic (linear) can be classically simulated.
The condition $\gamma \geq 2$ is important in the context of boson sampling in that it renders collision-free outcomes \cite{aaronson2011computational}.
On the other hand, the hardness of (Gaussian) boson sampling states that an efficient sampler does not exist unless the polynomial hierarchy collapses \cite{aaronson2011computational, hamilton2017gaussian} for circuits of a global Haar-random matrix.
Our result together with the hardness of (Gaussian) boson sampling implies that in order to implement a global Haar-random unitary matrix $U$ in Eq.~\eqref{eq:unitary}, the depth of a beam splitter $M$-mode circuit needs to satisfy $D> c_1N^{\frac{2}{d}(\gamma-1)-c_2}\sim M^{\frac{2}{d}\frac{\gamma-1}{\gamma}-c_2'}$, where $\gamma\geq 2$ and $c_2'=c_2/\gamma>0$ are arbitrary.
%In other words, a $M\times M$ Haar-random unitary matrix $U$ requires at least a depth $D\sim M^{\frac{2}{d}\frac{\gamma-1}{\gamma}-c_2}$ of random beam splitters.
Notably, if we take a limit of $\gamma \to \infty$, a required depth for a Haar-random unitary matrix approaches to $D=O(M^{\frac{2}{d}-\epsilon})$ where $\epsilon$ is an arbitrarily small number.
It suggests that, for example, for one(two)-dimensional architectures, one needs to constitute random beam splitters of a depth at least quadratic (linear) in the number of modes.

Our Lemma 1 has an implication on how fast quantum information propagates, which is an interesting and important topic from fundamental and practical perspectives \cite{bravyi2006lieb, jurcevic2014quasiparticle, nahum2018operator, mi2021information}.
While a Lieb-Robinson light cone \cite{lieb1972finite} gives a general bound of information propagation speed, a so-called Frobenius light cone \cite{roberts2016lieb, tran2020hierarchy} was recently shown to be tighter than the Lieb-Robinson light cone for quantum state-transfer tasks \cite{cirac1997quantum} as well as many-body quantum chaos, revealing a gap between the two light cones \cite{tran2020hierarchy}.
Basically, a Frobenius light cone is obtained by averaging a correlation function over input states while a Lieb-Robinson light cone takes into account the worst-case input state.
As the gap of the worst-case and average case was found in the literature,
our average-case study for random circuits together with the previous worst-case study for Hamiltonian dynamics in Ref. \cite{deshpande2018dynamical} also clearly reveals a gap of information propagation speeds from a different aspect.

It is worth emphasizing that diffusive dynamics leading to the sub-linearity propagation of quantum information in our system is fundamentally related to the U(1) symmetry, which conserves total excitation number \cite{zhang2020entanglement, khemani2018operator, rakovszky2018diffusive}, whereas quantum information of typical systems without symmetry spreads ballistically in space-time \cite{kim2013ballistic, nahum2018operator, xu2020accessing}.
Hence, our observation shows that extra symmetry will affect the quantum information propagation, and on the other hand also enable more efficient simulation algorithms.
It would be an interesting future work to investigate different kinds of symmetry that affect the system's sampling complexity.

Throughout the work, we have assumed an equally distributed sources and this assumption plays an important role in the proposed simulators.
Nonetheless, once a Haar-random unitary circuit is implemented, the hardness of (Gaussian) boson sampling is independent of an initial configuration, i.e., sources need not be equally distributed.
Thus, the implication of our result on the hardness of random bosonic circuits does not rely on an initial configuration.
Still, it is an interesting open question to find a depth for easiness assuming different configurations.
Also, we have focused on squeezed vacuum states and single-photon states because of the relation to (Gaussian) boson sampling.
We expect that a similar result can be derived for a different type of states because the key property of the problem was a mapping to a simple random walk presented by Lemma \ref{lemma:leakage}, although we leave such a generalization as an open question.

%\begin{itemize}
	%\item Unitary $t$-design can be approximately achieved by depth $M=N^{\gamma}$ \cite{brandao2016local, harrow2018approximate}
    %\item Can we find a property of a unitary matrix $U$ to give a hardness like DV case
%\end{itemize}

In conclusion, we have studied classical simulability of bosonic random circuits consisting of beam splitter arrays. We have investigated two kinds of bosonic sources: squeezed vacuum states and single-photon states.
We have provided efficient classical samplers that approximately simulate generic bosonic quantum circuits of beam splitters and shown a depth limit that the algorithms work.
%Our result shows that random beam splitter circuits' depth less than linear for one-dimensional circuits and less than linear for two-dimensional circuits in the number of sources allows an efficient simulation if the number of modes increases faster than linear in the number of sources. 
%This claims a condition on the circuit depth for which the circuit is classically simulable. 
%Our result much widens the easiness region compared to the work considering arbitrary dynamics~\cite{deshpande2018dynamical}, because of local random structure. It is expected that if there is more specific structure on the circuit, the depth for easiness can be deeper. Seeking those conditions for plausible physical models is future work.

\section*{acknowledgments}
We thank Owen Howell, Alireza Seif, Roozbeh Bassirian, Abhinav Deshpande for interesting and fruitful discussions.
C.O. and L.J. acknowledge support from the ARL-CDQI (W911NF-15-2-0067), ARO (W911NF-18-1-0020, W911NF-18-1-0212), ARO MURI (W911NF-16-1-0349), AFOSR MURI (FA9550-15-1-0015, FA9550-19-1-0399), DOE (DE-SC0019406), NSF (EFMA-1640959, OMA-1936118), and the Packard Foundation (2013-39273). Y. L. acknowledges National Research Foundation of Korea a grant funded by the Ministry of Science and ICT (NRF-2020M3E4A1077861) and KIAS Individual Grant
(CG073301) at Korea Institute for Advanced Study.
B.F. acknowledges support from AFOSR (YIP number FA9550-18-1-0148 and
FA9550-21-1-0008). This material is based upon work partially
supported by the National Science Foundation under Grant CCF-2044923
(CAREER).

\appendix
\section{Random walk behavior of random beam splitter arrays}\label{appendix:randomwalk}
Let us consider a single source and random beam splitter arrays acting on the source.
Let us consider two modes $\hat{a}_{k}^{(D)}$ and $\hat{a}_{k+1}^{(D)}$ at depth $D$, which are written as
\begin{align}
\hat{a}_{k}^{(D)}=U_{k,s}^{(D)}\hat{a}_{s}^{(0)}+\text{vac}, ~~\text{and}~~ \hat{a}_{k+1}^{(D)}=U_{k+1,s}^{(D)}\hat{a}_{s}^{(0)}+\text{vac},
\end{align}
where $\text{vac}$ represents the contributions from initial modes occupied by vacuum states.
After applying a beam splitter between them, the modes are transformed as
\begin{align}
    \hat{a}_{k}^{(D+1)}&=  e^{i\phi_1}(U_{k,s}^{(D)}\cos\theta+ e^{i\phi_0}  U_{k+1,s}^{(D)}\sin\theta)\hat{a}_{s,0}+\text{vac} \nonumber \\ 
    &=U_{k,s}^{(D+1)}\hat{a}_{s,0}+\text{vac},\\
    \hat{a}_{k+1}^{(D+1)}&=  e^{i\phi_2}(U_{k+1,s}^{(D)}\cos\theta- e^{-i\phi_0}  U_{k,s}^{(D)}\sin\theta)\hat{a}_{s,0}+\text{vac} \nonumber \\ 
    &=U_{k+1,s}^{(D+1)}\hat{a}_{s,0}+\text{vac},
\end{align}
where $\cos\theta$ and $\sin\theta$ represent the beam splitter's transmissivity and reflectivity, respectively, and
\begin{align}
    U_{k,s}^{(D+1)}&=e^{i\phi_1}(U_{k,s}^{(D)}\cos\theta+ e^{i\phi_0}  U_{k+1,s}^{(D)}\sin\theta),~~~ \text{and}\nonumber 
    \\ U_{k+1,s}^{(D+1)}&=e^{i\phi_2}(U_{k+1,s}^{(D)}\cos\theta- e^{-i\phi_0}  U_{k,s}^{(D)}\sin\theta).
\end{align}
After averaging the transmissivity $\cos\theta$ over a uniform distribution of $\theta\in[0,2\pi)$, and the phases $\phi_0,\phi_1$ and, $\phi_2$ over a uniform distribution of $[0,2\pi)$, we obtain
\begin{align}
    \mathbb{E}[|U_{k,s}^{(D+1)}|^2]=
    \mathbb{E}[|U_{k+1,s}^{(D+1)}|^2]
    =\frac{\mathbb{E}[|U_{k,s}^{(D)}|^2]+\mathbb{E}[|U_{k+1,s}^{(D)}|^2]}{2},
\end{align}
which shows that the transmissivity of a random beam splitter array follows a random walk behavior.
\iffalse
Defining $w_{k}^{(D)}=|U_{k,s}^{(D)}|^2$, we have relations on average
\begin{align}
    w_{2k}^{(2D+1)}=w_{2k+1}^{(2D+1)}=\frac{1}{2}(w_{2k}^{(2D)}+w_{2k+1}^{(2D)}),~~~\text{and}~~~
    w_{2k-1}^{(2D)}=w_{2k}^{(2D)}=\frac{1}{2}(w_{2k-1}^{(2D-1)}+w_{2k}^{(2D-1)}).
\end{align}

If we define 
\begin{align}
    w_{2k}^{(2D+1)}+w_{2k+1}^{(2D+1)}\equiv W_{2k}^{(2D+1)},~~~\text{and}~~~ W_{2k-1}^{(2D)}=w_{2k-1}^{(2D)}+w_{2k}^{(2D)}
\end{align} 
to be the transmission rate through the $2k$th and $(2k-1)$th beam splitter, it can be described by Pascal's triangle,
\begin{align}
    W_{2k}^{(2D+1)}=\frac{1}{2}(W_{2k-1}^{(2D)}+W_{2k+1}^{(2D)}), ~~~\text{and}~~~    W_{2k-1}^{(2D)}=\frac{1}{2}(W_{2k-2}^{(2D-1)}+W_{2k}^{(2D-1)}).
\end{align}
\fi

\section{Proof of Lemma \ref{lemma:leakage}}\label{appendix:leakage}
\begin{proof}
We first note that on average the random circuits can be characterized by a symmetric random walk and that the goal is to find an upper-bound of the leakage rate.
We observe that the leakage rate assuming an infinite number of modes is always larger than one with boundaries because boundaries makes the walker return to the initial lattice.
Thus, it is sufficient to find an upper-bound assuming an infinite number of modes.

For one-dimensional random walk, the probability of propagating farther than $l$ in step $t$ is given by
\begin{align}
    P\leq 2\exp\left(-\frac{l^2}{2t}\right).
\end{align}
Since we are interested in the leakage rate out of a sublattice, we set $l=L/2$ with $L=(M/N)^{1/d}$ for a $d$-dimensional case.
Taking into account the dimension of the circuit, we set $t=D/d$.
Thus, the leakage rate can be upper-bounded as
\begin{align}
    \mathbb{E}[\eta]&\leq 1-\left[1-2 \exp\left(-\frac{(L/2)^2}{2D/d}\right)\right]^d \nonumber \\ 
    &\leq 2d \exp\left(-\frac{L^2}{8D/d}\right) =2d\exp\left(-\frac{dk^2}{8D}N^{\frac{2}{d}(\gamma-1)}\right).
\end{align}
Using Markov's inequality, we obtain
\begin{align}
    P(\eta\geq a)\leq \frac{\mathbb{E}[\eta]}{a}=\frac{2d}{a} \exp\left(-\frac{dk^2}{8D}N^{\frac{2}{d}(\gamma-1)}\right).
\end{align}
Especially for depth $D\leq c_1 N^{\frac{2}{d}(\gamma-1)-c_2}$ and $a=\exp(-N^{c_2})$ with an arbitrary $c_2>0$, we find
\begin{align}
    P(\eta\geq \exp(-N^{c_2}))\leq 2d \exp\left(-\frac{dk^2}{8c_1}N^{c_2}+N^{c_2}\right),
\end{align}
which converges to zero for large $N$ when $c_1<dk^2/8$.
\end{proof}

\section{Gaussian boson sampling classical algorithm}\label{appendix:GBS}
In this Appendix, we present more details about Gaussian boson sampling and a classical algorithm for Gaussian boson sampling.
We consider $N$ number of sources in $M$ modes with random beam splitter arrays.
Let us consider a covariance matrix $\Sigma_{ij}=\text{Tr}[\hat{\rho}\{\hat{\xi}_i,\hat{\xi}_j\}]/2$ of a Gaussian state $\hat{\rho}$ with $\hat{\xi}=(\hat{a}_1,\dots,\hat{a}_M,\hat{a}_1^\dagger,\dots,\hat{a}_M^\dagger)$ to follow a notational convention used in Ref. \cite{hamilton2017gaussian,quesada2020exact}.
Note that a covariance $\Sigma$ can be easily obtained by a covariance matrix $V$, which we have used in the main text.

In the case of general Gaussian input states of a covariance matrix $\Sigma$ and a zero displacement, the probability of each outcome $(n_1,n_2,\dots,n_M)$ obtained by the measurement in photon-number basis is given by \cite{hamilton2017gaussian}
\begin{align}
    P(n_1,n_2,\dots,n_M)=\frac{1}{\sqrt{\text{det}(\Sigma+\mathbb{1}_{2M}/2)}}\frac{\text{Haf}(A_n)}{n_1!n_2!\cdots n_M!},
\end{align}
where
\begin{align}
    A=
    Y_M[\mathbb{1}_{2M}-(\Sigma+\mathbb{1}_{2M}/2)^{-1}], ~~~
    Y_m=
    \begin{pmatrix}
    0 & \mathbb{1}_M \\
    \mathbb{1}_M & 0
    \end{pmatrix}.
\end{align}
Here, $A_n$ is a matrix obtained by repeating the $j$th and $(j+M)$th row and column of $A$ for $n_j$ times for $1\leq j \leq M$.

Now, let $\Sigma^{(k)}$ be the reduced covariance matrix of the first $k$ modes.
From the reduced covariance matrix, one can constitute matrices $O^{(k)}\equiv \mathbb{1}_{2k}-(Q^{(k)})^{-1}$, and $A^{(k)}\equiv Y^{(k)}O^{(k)}$, and the latter gives a marginal probability distribution on the first $k$ modes as
\begin{align}\label{eq:GBS_prob}
    P(n_1,\dots,n_k)=\frac{1}{\sqrt{\det (\Sigma^{(k)}+\mathbb{1}_{2k}/2)}}\frac{\text{Haf}(A_n^{(k)})}{n_1!\cdots n_k!},
\end{align}
where $A_n^{(k)}$ is the matrix obtained by repeating rows and columns $i$ and $i+M$ of the matrix $A$ $n_i$ times for $1\leq i \leq k$.

By directly computing Hafnian of $A_n^{(k)}$, one may use a Monte-Carlo method to sample outcomes as follows \cite{quesada2020exact}:
First, we compute a marginal probability distribution $P(n_1)$ for $0\leq n_1\leq n_\text{max}$ and sample $n_1^*$ based on the distribution.
Since $n_j$ can be infinitely large in principle, we choose an upper-threshold of $n_j$ carefully (See Appendix \ref{sec:threshold}).
After sampling $n_1^*$, we compute $P(n_1^*,n_2)$ for $0\leq n_2\leq n_\text{max}$, sample $n_2^*$ according to the condition probability distribution
\begin{align}
    P(n_2|n_1^*)=\frac{P(n_1^*,n_2)}{P(n_1^*)}.
\end{align}
For $k$th step of the above procedure, we have $(n_1^*,\dots, n_{k-1}^*)$, so that we compute $P(n_1^*,\dots,n_{k-1}^*,n_k)$ and sample $n_k$ according to the conditional probability distribution
\begin{align}
    P(n_k|n_1^*,\dots,n_{k-1}^*)=\frac{P(n_1^*,\dots,n_{k-1}^*,n_k)}{P(n_1^*,\dots,n_{k-1}^*)}.
\end{align}
We iterate this procedure until we get $(n_1,\dots, n_{M})$.
Since the sampling procedure can be efficiently executed, the dominant computational cost is to compute the Hafnian of the relevant matrices.
when we compute the probability distribution $P(n_1)$, one needs to compute the Hafnian of a matrix.
In general, the computational cost to compute the Hafnian of a complex $2k\times 2k$ matrix is $O(k^3 2^{k})$\cite{bjorklund2019faster}, and thus the computational cost to simulate Gaussian boson sampling is $O(M n_\text{max}^3 2^{n_\text{max}})$.
However, if the rank $R$ of a matrix is low, the computational cost to compute its Hafnian can be significantly reduced as $\binom{2M+R-1}{R-1}\text{poly}(2M)$ \cite{kan2008moments,bjorklund2019faster}.
Using this, we now show that classical simulation of beam splitter circuits beginning with a single squeezed vacuum state can be efficient.
Note that if we assume threshold detectors instead of photon-counting detectors \cite{quesada2018gaussian}, the above probability distributions are coarse-grained such that the outcomes are vacuum or more than or equal to a single-photon.
The probability for the latter is obtained by computing Torontonian.

\iffalse
Finally, the probability distribution of Eq.~\eqref{eq:GBS_prob} is simplified as 
\begin{align}
    P(n_1,n_2,\cdots,n_M)=\frac{1}{\sqrt{\text{det}(\Sigma+\mathbb{1}_{2M}/2)}}|\text{Haf}(B_n)|^2
\end{align}
where $B_n$ is the $N\times N$ submatrix that comprises only the rows and columns where a single-photon was detected.
\fi

Especially when the input state is comprised of squeezed vacuum states of real squeezing parameters $r_j$, we can simplify the matrix as, where 
\begin{align}\label{eq:b_mat}
    A=B\oplus B^*, ~~~ \text{where} ~~~ B=K(\oplus_{j=1}^M \tanh r_j)K^\text{T}.
\end{align}
Here, $K$ is an $M\times M$ matrix describing beam splitter arrays, and $r_j$ is a squeezing parameter, which is non-zero only for the sources.
In this case, the covariance matrix $\Sigma$ of the output state is obtained by
\begin{align}
    \Sigma=\frac{1}{2}
    \begin{pmatrix}
        K & 0 \\
        0 & K^*
    \end{pmatrix}
    SS^\dagger
    \begin{pmatrix}
        K^\dagger & 0 \\
        0 & K^\text{T}
    \end{pmatrix},
\end{align}
where
\begin{align}
    S=
    \begin{pmatrix}
        \oplus_{j=1}^M \cosh{r_j} & \oplus_{j=1}^M \sinh{r_j} \\ 
        \oplus_{j=1}^M \sinh{r_j} & \oplus_{j=1}^M \cosh{r_j}
    \end{pmatrix}
\end{align}
represents the squeezing symplectic matrix.
Here, we assume a single squeezed state input, so that $r_1>0$ and $r_j=0$ for $j>1$ without loss of generality.
From Eq.~\eqref{eq:b_mat}, one can immediately see that $A_n$ is a rank-2 matrix for any $(n_1,\dots,n_M)$.
Now, it suffices to show that $A_n^{(k)}$ is also rank-2.
It is known that a reduced covariance matrix from a single-source covariance matrix can always be written as \cite{botero2003modewise}
\begin{align}
    \Sigma^{(k)}&=\frac{1}{2}    
    \begin{pmatrix}
        L & 0 \\
        0 & L^*
    \end{pmatrix}
    S^{(k)} D^{(k)}S^{(k)\dagger}
    \begin{pmatrix}
        L^\dagger & 0 \\
        0 & L^\text{T}
    \end{pmatrix}, \\ 
    D^{(k)}&=\text{diag}(\nu_1,\dots,\nu_k)\oplus\text{diag}(\nu_1,\dots,\nu_k), \\
    S^{(k)}&=
    \begin{pmatrix}
        \oplus_{j=1}^k \cosh{r_j^{(k)}} & \oplus_{j=1}^k \sinh{r_j^{(k)}} \\ 
        \oplus_{j=1}^k \sinh{r_j^{(k)}} & \oplus_{j=1}^k \cosh{r_j^{(k)}}
    \end{pmatrix},
\end{align}
where $\nu_1>1$, and $\nu_j=1$ for $2\leq j \leq k$, and $r_1^{(k)}>0$ and $r_j^{(k)}=0$ for $2\leq j \leq k$.
In other words, the reduced state is a product of a thermal state and vacuum states followed by a squeezing operation and beam splitter arrays.
One can easily check that $A_n^{(k)}$ is also rank-2.
Thus, in this case, the computation of Hafnian is polynomial in $M$, so that the classical simulation using the Monte-Carlo simulation is efficient.

\section{Error of photon-number truncation}\label{sec:threshold}
In this section, we analyze the influence of photon-number truncation in squeezed states.
When we have $N$ squeezed vacuum states and measure them in photon-number basis, 
the probability to generate a total of $k$ photon pair events (2$k$ photons) is given by the negative binomial distribution \cite{hamilton2017gaussian, kruse2019detailed}
\begin{align}
    P_N(k)=\binom{\frac{N}{2}+k-1}{k}\text{sech}^Nr\tanh^{2k}r.
\end{align}
Note that beam splitter arrays do not change the probability distribution of total photon numbers.
The tail probability of the negative binomial distribution is given by \cite{brown2011wasted}
\begin{align}
    \text{Pr}(k> \alpha N \text{sech}^2 r)\leq \exp\left[\frac{-\alpha N(1-1/\alpha)^2}{2}\right].
\end{align}
While increasing a constant $\alpha$ decreases the truncation error exponentially, the order of the complexity $O(M n_\text{max}^3 2^{n_\text{max}})$ does not change.
Any accuracy can be achieved by increasing $\alpha$ with a constant factor which reduces error exponentially.
In order to make the truncation error to be smaller than $\epsilon$ for sufficiently large $\alpha$, we may choose $\alpha N \text{sech}^2r= 2\text{sech}^2r\log(1/\epsilon)\equiv n_\text{max}$.

\section{Proof of Lemma \ref{lemma:infid}}\label{appendix:infid}
\begin{proof}
Note that the covariance matrix $V_1$ can be decomposed as $V_1=S(\mathbb{1}_{2M}/2)S^\text{T}$ by a symplectic matrix $S$ satisfying $S\Omega S^\text{T}=\Omega$ and that the symplectic matrix can be decomposed as $S=O S_\text{sq}$, where $O$ represents a symplectic matrix corresponding to beam splitter arrays and $S_\text{sq}$ represents squeezing operators to generate squeezed vacuum sources.
If $V_1=V_2+X$ and $\|X\|\ll 1$, we have
\begin{align}
\det(V_1+V_2)&=\det(2V_1-X)=\det(\mathbb{1}_{2M}-S^{-1} X S^{-\text{T}}) \nonumber \\ 
&\leq \left(1+\frac{1}{2M}|\text{Tr}[S^{-1} X S^{-\text{T}}|]\right)^{2M} \nonumber \\
&=\sum_{k=0}^{2M} \binom{2M}{k}\left(\frac{1}{2M}|\text{Tr}[S^{-1} X S^{-\text{T}}]|\right)^k \nonumber \\ 
&\leq \sum_{k=0}^{2M} \binom{2M}{k}\left(\frac{1}{2M}\|X\|\sqrt{\text{Tr}(S^{-\text{T}}S^{-1})^2}\right)^k \nonumber \\ 
&=1+\|X\|\sqrt{2N\cosh{4r}}+O(\|X\|^2),
\end{align}
where for the first inequality, we have used the upper-bound of determinant and that
\begin{align}
\text{Tr}[S^{-1}X S^{-\text{T}}]&=\text{Tr}[S^{-1}(V_1-V_2) S^{-\text{T}}] \nonumber \\ 
&=\text{Tr}[\mathbb{1}_{2M}/2-S^{-1}V_2S^{-\text{T}}] \nonumber \\ 
&=\text{Tr}[\mathbb{1}_{2M}/2-\tilde{V}_2]\leq 0.
\end{align}
The Cauchy-Schwarz inequality has been used for the second inequality.
Then, the quantum fidelity between Gaussian states with covariance matrices $V_1$ and $V_2$ is approximated as
\begin{align}
1-F(V_1,V_2)&=1-\frac{1}{\sqrt{\det(V_1+V_2)}} \nonumber \\ 
&\leq \frac{1}{2}\|X\|\sqrt{2N\cosh{4r}}+O(\|X\|^2).
\end{align}
\end{proof}

\begin{widetext}
\section{Proof of lemma \ref{lemma:correlation}}\label{appendix:correlation}
In this section, we compute $\|X\|^2$ and prove Lemma \ref{lemma:correlation}.
We first note that $X$ is the difference matrix between two covariance matrices $V_\text{out}$ and $V_a$.
\begin{align}\label{eq:x_sec}
    \|X\|^2
    &=\sum_{\alpha=1}^{N}\sum_{i,j\in \mathcal{L}_\alpha}X_{ij}^2+
    \sum_{1\leq \alpha\neq \beta\leq N}\sum_{i\in \mathcal{L}_\alpha}\sum_{j\in \mathcal{L}_\beta}X_{ij}^2
    \nonumber \\
    &=\frac{1}{4}\sum_{\alpha=1}^{N}\sum_{i\in \mathcal{L}_\alpha}(\delta \langle \{\hat{x}_{i},\hat{x}_{j}\}\rangle^2+\delta \langle \{\hat{p}_{i},\hat{p}_{j}\}\rangle^2+2\delta \langle \{\hat{x}_{i},\hat{p}_{j}\}\rangle^2) \nonumber \\ 
    &+\frac{1}{4}\sum_{1\leq \alpha\neq \beta\leq N}\sum_{i\in \mathcal{L}_\alpha}\sum_{j\in \mathcal{L}_\beta}\left(\langle\{\hat{x}_i,\hat{x}_j\}\rangle^2+\langle\{\hat{p}_i,\hat{p}_j\}\rangle^2+2\langle\{\hat{x}_i,\hat{p}_j\}\rangle^2\right),
\end{align}
where $\delta \langle \{\hat{O}_1,\hat{O}_2\}\rangle\equiv \langle \{\hat{O}_1,\hat{O}_2\}\rangle-\langle\{ \hat{O}_1,\hat{O}_2\}\rangle_a$ with $\langle \cdot \rangle$ and $\langle \cdot \rangle_a$ representing the expectation value for an exact and an approximate covariance matrix, respectively.
We compute the elements one by one.
First, we focus on the first terms, where $i,j$ are in the same sublattices.
Using
\begin{align}
    \hat{x}_i=\frac{1}{\sqrt{2}}(U_{ik}\hat{a}_k+U_{ik}^*\hat{a}_k^\dagger), ~~~\text{and}~~~ \hat{p}_i=\frac{i}{\sqrt{2}}(U_{ik}^*\hat{a}_k^\dagger-U_{ik}\hat{a}_k),
\end{align}
with $\hat{a}_k$ being an annihilation operator before beam splitter arrays,
we obtain
\begin{align}
    \langle \{\hat{x}_i,\hat{x}_j\}\rangle
    %&=\langle\sum_{k,l=1}^M(U_{ik}\hat{a}_k+U_{ik}^*\hat{a}_k^\dagger)(U_{jl}\hat{a}_l+U_{jl}^*\hat{a}_l^\dagger)\rangle
    %=\langle\sum_{k=1}^M(U_{ik}\hat{a}_k+U_{ik}^*\hat{a}_k^\dagger)(U_{jk}\hat{a}_k+U_{jk}^*\hat{a}_k^\dagger)\rangle \nonumber \\ 
    &=\sum_{k=1}^M\langle U_{ik}U_{jk}\hat{a}_k^2+U_{ik}^{*}U_{jk}^{*}\hat{a}_k^{\dagger2}+U_{ik}U_{jk}^*\hat{a}_k\hat{a}_k^\dagger+U_{ik}^*U_{jk}\hat{a}_k^\dagger\hat{a}_k\rangle, \\ 
    \langle \{\hat{p}_i,\hat{p}_j\}\rangle
    %&=-\langle\sum_{k,l=1}^M(U_{ik}^*\hat{a}_k^\dagger-U_{ik}\hat{a}_k)(U_{jl}^*\hat{a}_l^\dagger-U_{jl}\hat{a}_l)\rangle
    %=-\langle\sum_{k=1}^M(U_{ik}^*\hat{a}_k^\dagger-U_{ik}\hat{a}_k)(U_{jk}^*\hat{a}_k^\dagger-U_{jk}\hat{a}_k)\rangle \nonumber \\ 
    &=-\sum_{k=1}^M\langle U_{ik}U_{jk}\hat{a}_k^2+U_{ik}^*U_{jk}^*\hat{a}_k^{\dagger2}-U_{ik}U_{jk}^*\hat{a}_k\hat{a}_k^\dagger-U_{ik}^*U_{jk}\hat{a}_k^\dagger\hat{a}_k\rangle, \\ 
    \langle \{\hat{x}_i,\hat{p}_j\}\rangle
    %&=\frac{i}{2}\sum_{k,l=1}^M\langle(U_{ik}^*\hat{a}_k^\dagger+U_{ik}\hat{a}_k)(U_{jl}^*\hat{a}_l^\dagger-U_{jl}\hat{a}_l)+(U_{jl}^*\hat{a}_l^\dagger-U_{jl}\hat{a}_l)(U_{ik}^*\hat{a}_k^\dagger+U_{ik}\hat{a}_k)\rangle \nonumber \\ 
    %&=\frac{i}{2}\sum_{k=1}^M\langle(U_{ik}^*\hat{a}_k^\dagger+U_{ik}\hat{a}_k)(U_{jk}^*\hat{a}_k^\dagger-U_{jk}\hat{a}_k)+(U_{jk}^*\hat{a}_k^\dagger-U_{jk}\hat{a}_k)(U_{ik}^*\hat{a}_k^\dagger+U_{ik}\hat{a}_k)\rangle \\ 
    &=i\sum_{k=1}^M\langle -U_{ik}U_{jk}\hat{a}_k^2+U_{ik}^*U_{jk}^*\hat{a}_k^{\dagger2}\rangle.
\end{align}
Here, note that the approximate covariance matrix is obtained by assuming other sublattices to be vacuum at the beginning.
Thus, for a fixed $\alpha$, the difference between the exact and approximate covariance matrices is that for the approximate covariance matrix, $\langle \hat{a}_k^2\rangle_a=\langle \hat{a}_k^{\dagger2}\rangle_a=\langle \hat{a}_k^\dagger\hat{a}_k\rangle_a=0$ and $\langle \hat{a}_k\hat{a}_k^\dagger\rangle_a=1$ for $k\not\in \mathcal{L}_\alpha$ but for the exact covariance matrix, $\langle \hat{a}_k^2\rangle=\langle \hat{a}_k^{\dagger2}\rangle=\cosh{r}\sinh{r}$, $\langle \hat{a}_k^\dagger\hat{a}_k\rangle=\cosh^2 r$ and $\langle \hat{a}_k\hat{a}_k^\dagger\rangle=\sinh^2{r}$ for $k\in \mathcal{S}$.
Also, $\langle \hat{a}_k^2\rangle=\langle \hat{a}_k^2\rangle_a,\langle \hat{a}_k^{\dagger2}\rangle=\langle \hat{a}_k^{\dagger2}\rangle_a,\langle \hat{a}_k^\dagger\hat{a}_k\rangle=\langle \hat{a}_k^\dagger\hat{a}_k\rangle_a=0$ and $\langle \hat{a}_k\hat{a}_k^\dagger\rangle=\langle \hat{a}_k\hat{a}_k^\dagger\rangle_a$ for $k\in \mathcal{L}_\alpha$.
Therefore, the approximation error for a fixed $\alpha$ is written as
\begin{align}
    \left|\delta \langle \{\hat{x}_i,\hat{x}_j\}\rangle\right|
    &=\left|\sum_{k\in\mathcal{M}\setminus\mathcal{L}_\alpha}\langle U_{ik}U_{jk}\hat{a}_k^2+U_{ik}^{*}U_{jk}^{*}\hat{a}_k^{\dagger2}+U_{ik}U_{jk}^*\hat{a}_k\hat{a}_k^\dagger+U_{ik}^*U_{jk}\hat{a}_k^\dagger\hat{a}_k\rangle-\sum_{k\in\mathcal{M}\setminus\mathcal{L}_\alpha}U_{ik}U^*_{jk}\right| \\
    &=\left|\sum_{k\in \mathcal{S}\setminus\{s_\alpha\}} \left(U_{ik}U_{jk}\cosh{r}\sinh{r}+U_{ik}^{*}U_{jk}^{*}\cosh{r}\sinh{r}+U_{ik}U_{jk}^*\sinh^2r+U_{ik}^*U_{jk}\sinh^2{r}\right)\right| \\ 
    &\leq 2\sum_{k\in \mathcal{S}\setminus\{s_\alpha\}} |U_{ik}||U_{jk}|e^r\sinh{r}
    \leq e^{2r}\sum_{k\in \mathcal{S}\setminus\{s_\alpha\}} |U_{ik}||U_{jk}|.
\end{align}
Now, we compute the sum of the absolute value for different $i$ and $j$,
\begin{align}
    \sum_{\alpha=1}^N\sum_{i,j\in \mathcal{L}_\alpha}|\delta\langle\{\hat{x}_i,\hat{x}_j\}\rangle|^2
    &\leq e^{4r}\sum_{\alpha=1}^N\sum_{i,j\in \mathcal{L}_\alpha}\sum_{k,l\in \mathcal{S}\setminus\{s_\alpha\}}  |U_{ik}||U_{jk}||U_{il}||U_{jl}| \\ 
    &\leq e^{4r}\sum_{\alpha=1}^N\sum_{k,l\in \mathcal{S}\setminus\{s_\alpha\}}\sqrt{\sum_{i\in\mathcal{L}_\alpha}|U_{ik}|^2\sum_{i\in\mathcal{L}_\alpha}|U_{il}|^2\sum_{j\in\mathcal{L}_\alpha}|U_{jk}|^2\sum_{j\in\mathcal{L}_\alpha}|U_{jl}|^2} \\ 
    &\leq e^{4r}\sum_{\alpha=1}^N\sum_{k,l\in \mathcal{S}\setminus\{s_\alpha\}}\eta^2=e^{4r}N(N-1)^2\eta^2.
\end{align}
Here, we have used the Cauchy-Schwarz inequality and defined $\eta$ to be a leakage rate from a source to other sublattices.
Following the same procedure, one can find the same upper-bound for momentum quadratures,
\begin{align}
    \sum_{\alpha=1}^N\sum_{i,j\in \mathcal{L}_\alpha}|\delta\langle\{\hat{p}_i,\hat{p}_j\}\rangle|^2\leq e^{4r}N(N-1)^2\eta^2.
\end{align}

Similarly, we get
\begin{align}
    \left|\delta \langle \{\hat{x}_i,\hat{p}_j\}\rangle\right|
    &=\left|\sum_{k\in\mathcal{M}\setminus\mathcal{L}_\alpha}\langle -U_{ik}U_{jk}\hat{a}_k^2+U_{ik}^*U_{jk}^*\hat{a}_k^{\dagger2}\rangle\right|
    =\left|\sum_{k\in\mathcal{M}\setminus\mathcal{L}_\alpha}(-U_{ik}U_{jk}+U_{ik}^*U_{jk}^*)\cosh{r}\sinh{r}\right| \\
    &\leq 2\sum_{k\in \mathcal{S}\setminus\{s_\alpha\}} |U_{ik}||U_{jk}|\cosh{r}\sinh{r} 
    \leq \frac{e^{2r}}{2}\sum_{k\in \mathcal{S}\setminus\{s_\alpha\}} |U_{ik}||U_{jk}|.
\end{align}

Again, using the Cauchy-Schwarz inequality, we obtain
\begin{align}
    \sum_{\alpha=1}^N\sum_{i,j\in \mathcal{L}_\alpha}|\delta \langle \{\hat{x}_i,\hat{p}_j\}\rangle|^2
    \leq \frac{e^{4r}}{4}N(N-1)^2\eta^2.
\end{align}

Now, we obtain the upper-bound of the second terms in Eq.~\eqref{eq:x_sec}, where $i$ and $j$ are in different sublattices
Here, we approximate correlations as zero, which leads to
\begin{align}
    |\delta\langle \{\hat{x}_i,\hat{x}_j\}\rangle|
    &=\left|\sum_{k=1}^M\langle U_{ik}U_{jk}\hat{a}_k^2+U_{ik}^{*}U_{jk}^{*}\hat{a}_k^{\dagger2}+U_{ik}U_{jk}^*\hat{a}_k\hat{a}_k^\dagger+U_{ik}^*U_{jk}\hat{a}_k^\dagger\hat{a}_k\rangle\right|\\
    &=\left|\sum_{k\in \mathcal{S}}\left( U_{ik}U_{jk}\cosh{r}\sinh{r}+U_{ik}^{*}U_{jk}^{*}\cosh{r}\sinh{r}+U_{ik}U_{jk}^*\cosh^2r+U_{ik}^*U_{jk}\sinh^2r\right)+\sum_{k\not\in \mathcal{S}}U_{ik}U_{jk}^*\right|\\
    &=\left|\sum_{k\in \mathcal{S}}\left( U_{ik}U_{jk}\cosh{r}\sinh{r}+U_{ik}^{*}U_{jk}^{*}\cosh{r}\sinh{r}+U_{ik}U_{jk}^*\sinh^2r+U_{ik}^*U_{jk}\sinh^2r\right)\right|\\
    &\leq 2e^{r}\sinh{r}\sum_{k\in \mathcal{S}} |U_{ik}||U_{jk}|\leq e^{2r}\sum_{k\in S} |U_{ik}||U_{jk}|.
\end{align}
After taking summation over different pairs of $\alpha$ and $\beta$, we obtain
\begin{align}
    \sum_{1\leq \alpha\neq \beta\leq N}\sum_{i\in \mathcal{L}_\alpha}\sum_{j\in\mathcal{L}_\beta} |\delta\langle \{\hat{x}_i,\hat{x}_j\}\rangle|^2 
    &\leq e^{4r}\sum_{1\leq \alpha\neq \beta\leq N}\sum_{i\in \mathcal{L}_\alpha}\sum_{j\in\mathcal{L}_\beta}\sum_{k,l\in \mathcal{S}} |U_{ik}||U_{jk}||U_{il}||U_{jl}| \\ 
    &\leq e^{4r}\sum_{1\leq \alpha\neq \beta\leq N}\sum_{k,l\in \mathcal{S}}\sqrt{\sum_{i\in \mathcal{L}_\alpha} |U_{ik}|^2\sum_{i\in \mathcal{L}_\alpha}|U_{il}|^2\sum_{j\in\mathcal{L}_\beta}|U_{jk}|^2\sum_{j\in\mathcal{L}_\beta}|U_{jl}|^2} \\ 
    &=e^{4r}\sum_{1\leq \alpha\neq \beta\leq N}\sum_{k\in \mathcal{S}}\sqrt{\sum_{i\in \mathcal{L}_\alpha} |U_{ik}|^2\sum_{j\in\mathcal{L}_\beta}|U_{jl}|^2}\sum_{l\in \mathcal{S}}\sqrt{\sum_{i\in \mathcal{L}_\alpha}|U_{il}|^2\sum_{j\in\mathcal{L}_\beta}|U_{jk}|^2}.
\end{align}
Here,    
\begin{align}
    \sum_{k\in \mathcal{S}}\sqrt{\sum_{i\in \mathcal{L}_\alpha} |U_{ik}|^2\sum_{j\in\mathcal{L}_\beta}|U_{jk}|^2}
    &=\sum_{k\in \mathcal{S}\setminus\{s_\alpha,s_\beta\}}\sqrt{\sum_{i\in \mathcal{L}_\alpha} |U_{ik}|^2\sum_{j\in\mathcal{L}_\beta}|U_{jk}|^2}
    +\sqrt{\sum_{i\in \mathcal{L}_\alpha} |U_{is_\alpha}|^2\sum_{j\in\mathcal{L}_\beta}|U_{js_\alpha}|^2} \\
    &+\sqrt{\sum_{i\in \mathcal{L}_\alpha} |U_{is_\beta}|^2\sum_{j\in\mathcal{L}_\beta}|U_{js_\beta}|^2} \\ 
    &\leq \sum_{k\in \mathcal{S}\setminus\{s_\alpha,s_\beta\}} (\eta+2\sqrt{\eta})=(N-2)(\eta+2\sqrt{\eta}).
\end{align}
Thus, the sum of the approximation errors for different pairs of $\alpha$ and $\beta$ is bounded as
\begin{align}
    \sum_{1\leq \alpha\neq \beta\leq N}\sum_{i\in \mathcal{L}_\alpha}\sum_{j\in\mathcal{L}_\beta} |\delta\langle \{\hat{x}_i,\hat{x}_j\}\rangle|^2 \leq e^{4r}N(N-1)(N-2)^2(\eta+2\sqrt{\eta})^2.
\end{align}
Again, the same upper-bound can be found for momentum quadratures,
\begin{align}
    \sum_{1\leq \alpha\neq \beta\leq N}\sum_{i\in \mathcal{L}_\alpha}\sum_{j\in\mathcal{L}_\beta} |\delta\langle \{\hat{p}_i,\hat{p}_j\}\rangle|^2 \leq e^{4r}N(N-1)(N-2)^2(\eta+2\sqrt{\eta})^2.
\end{align}

Finally, the correlation of position and momentum quadratures is bounded as
\begin{align}
    |\delta\langle \{\hat{x}_i,\hat{p}_j\}\rangle|
    &=\left|\sum_{k=1}^M\langle -U_{ik}U_{jk}\hat{a}_k^2+U_{ik}^*U_{jk}^*\hat{a}_k^{\dagger2}\rangle \right|
    =\cosh{r}\sinh{r}\left|\sum_{k\in S}(-U_{ik}U_{jk}+U_{ik}^*U_{jk}^*)\right| \\ 
    &\leq 2\cosh{r}\sinh{r}\sum_{k\in \mathcal{S}}|U_{ik}||U_{jk}|\leq \frac{e^{2r}}{2}\sum_{k\in \mathcal{S}}|U_{ik}||U_{jk}|,
\end{align}
which leads to the sum of the errors as
\begin{align}
    \sum_{1\leq \alpha\neq \beta\leq N}\sum_{i\in \mathcal{L}_\alpha}\sum_{j\in\mathcal{L}_\beta} |\delta\langle \{\hat{x}_i,\hat{p}_j\}\rangle|^2 
    &\leq \frac{e^{4r}}{4}N(N-1)(N-2)^2(\eta+2\sqrt{\eta})^2.
\end{align}

Hence, we prove lemma \ref{lemma:correlation}
\begin{align}
    \|X\|^2\leq e^{4r}\left[N(N-1)^2\eta^2+N(N-1)(N-2)^2(\eta+2\sqrt{\eta})^2\right]\leq e^{4r}N^4(\eta+2\sqrt{\eta})^2.
\end{align}
\clearpage
\end{widetext}

\section{Single-photon state}\label{appendix:fock}
In this section, we analyze the error of an approximation by treating single-photons as distinguishable particles.
From Eq.~\eqref{eq:fock_error},
\begin{align}
    \epsilon&=\sum_t\frac{1}{2t!}\left|\sum_{\sigma\neq\tau}\prod_{k=1}^N U_{{\text{out}_{\sigma(k)}},{\text{in}_k}}\left(\prod_{l=1}^N U_{{\text{out}_{\tau(l)}},{\text{in}_l}}\right)^*\right| \\ 
    &\leq \frac{1}{2}\sum_{t}\sum_{\sigma\neq\tau} \left|\prod_{k=1}^N U_{{\text{out}_{\sigma(k)}},{\text{in}_k}}\right|\left|\prod_{l=1}^N U_{{\text{out}_{\tau(l)}},{\text{in}_l}}\right| \\ 
    &=\frac{1}{2}\sum_{t}\sum_{\sigma,\rho} \prod_{k=1}^N|U_{\text{out}_{\sigma(k)},\text{in}_k}U_{\text{out}_{\sigma(k)},\text{in}_{\rho(k)}}|\\
    &= \frac{1}{2}\sum_j\sum_{\rho\neq\text{Id}}\prod_i|U_{j_i,\text{in}_i}||U_{j_i,\text{in}_{\rho(i)}}|,
\end{align}
where we have used $|a+b|\leq |a|+|b|$ for the first inequality and rearranged the summation in the last equality.
Here, the sum $j$ is taken over ordered tuples $(j_1,\dots,j_N)$ with $1\leq j_l \leq M$ for all $1\leq l\leq N$, and the sum $\rho$ is taken over all possible permutations except for the identity $\text{Id}$.

Now, we prove lemma \ref{lemma:c_bound}.
Let us find an upper-bound of $C_i$:
\begin{align}
    C_i&=\sum_{j_i,\rho:\rho(i)\neq i}|U_{j_i,i}||U_{j_i,\rho(i)}|
    =\sum_{j:|i-j|\leq L/2}\sum_{k\in\text{in},k\neq i}|U_{j,i}||U_{j,k}|  \nonumber \\
    &~~~~~~~~~~~~~~~~~~~~~~+\sum_{j:|i-j|> L/2}\sum_{k\in\text{in},k\neq i}|U_{j,i}||U_{j,k}|.
\end{align}
Here, a distance $|i-j|$ is $l_\infty$ distance between $i$th position and $j$th position.
Consider the first term of $C_i$:
\begin{align}
    &\sum_{j:|i-j|\leq L/2}|U_{j,i}|\sum_{k\in\text{in},k\neq i}|U_{j,k}| \\
    %\leq& \sum_{j:|i-j|\leq L/2}|U_{i,j}|\sqrt{\sum_{k\in\text{in},k\neq i}1^2\sum_{k\in\text{in},k\neq i}|U_{j,k}|^2} \\ 
    \leq& \sum_{j:|i-j|\leq L/2}|U_{j,i}|\sqrt{(N-1)\sum_{k\in\text{in},k\neq i}|U_{j,k}|^2} \\
    \leq& \sum_{j:|i-j|\leq L/2}|U_{j,i}|\sqrt{(N-1)\eta} \\ 
    \leq& \sqrt{\sum_{j:|i-j|\leq L/2}|U_{j,i}|^2}\sqrt{\eta M(N-1)/N} \\
    \leq& \sqrt{\eta M (N-1)/N},
\end{align}
where we have used the Cauchy-Schwarz inequality for the first and second inequalities.

Now, the second term:
\begin{align}
    &\sum_{j:|i-j|> L/2}|U_{j,i}|\sum_{k\in\text{in},k\neq i}|U_{j,k}| \\ 
    %\leq& \sum_{j:|i-j|> L/2}|U_{i,j}|\sqrt{\sum_{k\in\text{in},k\neq i}1^2\sum_{k\in\text{in},k\neq i}|U_{j,k}|^2} \\ 
    \leq& \sum_{j:|i-j|> L/2}|U_{j,i}|\sqrt{N-1} \\ 
    %\leq& \sqrt{\sum_{j:|i-j|> L/2}1^2\sum_{j:|i-j|> L/2}|U_{i,j}|^2}\sqrt{N-1} \\ 
    \leq& \sqrt{(M-N)(N-1)\sum_{j:|i-j|> L/2}|U_{j,i}|^2} \\ 
    \leq& \sqrt{\eta (M-N)(N-1)},
\end{align}
where we have used the Cauchy-Schwarz inequality for the first and the second inequalities.
\iffalse
Again, we use Lemma \ref{lemma:leakage}.
\begin{align}
    \sqrt{(M-N)(N-1)\sum_{j:|i-j|> L/2}|U_{i,j}|^2}
    \leq \sqrt{(M-N)(N-1)}\exp(-N^{c_3}/2)\leq \sqrt{k}N^{(\gamma+1)/2}\exp(-N^{c_3}/2).
\end{align}
\fi
Thus,
\begin{align}
    C_i
    \leq \sqrt{\eta M (N-1)/N}+\sqrt{\eta (M-N)(N-1)}\leq 2\sqrt{\eta M N}.
\end{align}

\newpage
\bibliography{reference}

\end{document}